\documentclass[11pt,a4paper]{article}
\usepackage{fullpage}
\usepackage{microtype}
\usepackage{mathtools,amsmath} \usepackage{amssymb,amsthm,lineno}
\usepackage{graphicx} \usepackage{xspace} \usepackage{xypic}
\usepackage[numbers]{natbib} \usepackage{hyperref}
\usepackage{color,pdfpages}
\graphicspath{{figures/}}
\bibliographystyle{plainurl}

\theoremstyle{plain} 

\newtheorem{theorem}{Theorem} 
\newtheorem{lemma}[theorem]{Lemma}

\newtheorem{corollary}[theorem]{Corollary}

\title{Point Location in Dynamic Planar Subdivisions%
	\footnote{This research was supported by NRF grant 2011-0030044 (SRC-GAIA) funded by the government of Korea,
		and the MSIT (Ministry of Science and ICT), Korea, under the SW Starlab support program (IITP-2017-0-00905) 
		supervised by the IITP (Institute for Information \& communications Technology Promotion).}}


\author{Eunjin Oh\thanks{Max Planck Institute for Informatics, {Saarbr\"{u}cken, Germany, Email: {\tt{eoh@mpi-inf.mpg.de}}}} \and
	Hee-Kap Ahn\thanks{Pohang University of Science and Technology, {Pohang, Korea}, Email: {\tt{heekap@postech.ac.kr}}}}

\newcommand{\InsertEdge}{\textsc{InsertEdge}}
\newcommand{\DeleteEdge}{\textsc{DeleteEdge}}
\newcommand{\locate}{\textsc{locate}}

\newcommand{\oldd}{{\mathsf{D_o}}}
\newcommand{\newd}{\mathsf{D_n}}
\newcommand{\oldm}{{\mathsf{M_o}}}
\newcommand{\newm}{\mathsf{M_n}}

\newcommand{\compm}{\mathsf{M_c}}
\newcommand{\mm}{\mathsf{M}}
\newcommand{\ee}{\mathsf{E}}
\newcommand{\newce}{\mathsf{E_n}}
\newcommand{\oldce}{{\mathsf{E_o}}}

\newcommand{\oldf}{{F_\mathsf{o}}}
\newcommand{\newf}{F_\mathsf{n}}

\newcommand{\newe}{e_\mathsf{n}}
\newcommand{\olde}{e_\mathsf{o}}

\newcommand{\setS}{\mathsf{S}}

\graphicspath{{figures/}}

\begin{document}

\maketitle
\begin{abstract}
	We study the point location problem on dynamic planar  
	subdivisions that allows insertions and deletions of edges. In our problem,
		the underlying graph of a subdivision is not necessarily connected.
	We present a data structure of linear size for such a dynamic planar subdivision that supports
	sublinear-time update and polylogarithmic-time query. Precisely, the amortized update
	time is $O(\sqrt{n}\log n(\log\log n)^{3/2})$ and the query time is
	$O(\log n(\log\log n)^2)$, where $n$ is the number
	of edges in the subdivision. This answers a question posed by Snoeyink
	in the Handbook of Computational Geometry.
	When only deletions of edges are allowed, the update time and query time
	are just $O(\alpha(n))$ and $O(\log n)$, respectively.
\end{abstract}

\section{Introduction}

Given a planar subdivision, a point location query asks with a query
point specified by its coordinates to find the face 
of the subdivision containing the query point. In many situations such
point location queries are made frequently, and therefore it is
desirable to preprocess the subdivision and to store it in a data
structure that supports point location queries fast.

The planar subdivisions for point location queries are usually induced
by planar embeddings of graphs. A planar subdivision is connected if
the underlying graph is connected. The vertices and edges of the
subdivision are the embeddings of the nodes and arcs of the underlying
graph, respectively. An edge of the subdivision is considered to be
open, that is, it does not include its endpoints (vertices). A face of
the subdivision is a maximal connected subset of the plane that does
not contain any point on an edge or a vertex.

We say a planar subdivision \emph{dynamic} if the subdivision allows
two types of operations, the insertion of an edge to the subdivision
and the deletion of an edge from the subdivision.  The subdivision
changes over insertions and deletions of edges accordingly.  For an
insertion of an edge $e$, we require $e$ to intersect no edge or
vertex in the subdivision and the endpoints of $e$ to lie on no edge
in the subdivision.  We insert the endpoints of $e$ in the subdivision
as vertices if they were not vertices of the subdivision. In fact, the
insertion with this restriction is general enough.  The insertion of
an edge $e$ with an endpoint $u$ lying on an edge $e'$ of the
subdivision can be done by a sequence of four operations: deletion of
$e'$, insertion of $e$, and insertions of two subedges of $e'$
partitioned by $u$.

The dynamic point location problem is closely related to the dynamic
vertical ray shooting problem~\cite{Cheng-NewResults-1992}.  For this 
problem, we are asked to find the edge of a
dynamic planar subdivision that lies immediately above (or below) a query point.  In
the case that the subdivision is connected at any time, we can
answer a point location query without increasing the space and time
complexities using a data structure for the dynamic vertical ray shooting
problem by maintaining the list of the edges incident to each face in
a concatenable queue~\cite{Cheng-NewResults-1992}.

However, it is not the case in a general (possibly disconnected) planar
subdivision.  Although the dynamic vertical ray shooting algorithms
presented
in~\cite{ABG-Improved-2006,BJM-Dynamic-1994,cn-locatoin-2015,Cheng-NewResults-1992}
work for general (possibly disconnected) subdivisions, it is unclear how one can
use them to support point location queries efficiently.  As pointed
out in some previous
works~\cite{cn-locatoin-2015,Cheng-NewResults-1992}, a main issue
concerns how to test whether the two edges lying immediately above two
query points belong to the boundary of the same face in a
dynamic planar subdivision.  Notice that the boundary of a face may
consist of more than one connected component.

In this paper, we consider a point location query 
on dynamic planar subdivisions. The subdivisions we consider
are not necessarily connected, that is, the underlying graphs may
consist of one or more connected components.  We also require that
every edge is a straight line segment.  We present a data structure for
a dynamic planar subdivision which answers point location queries efficiently.

\subparagraph{Previous work.}
The dynamic vertical ray shooting problem has been studied
extensively~\cite{ABG-Improved-2006,BJM-Dynamic-1994,cn-locatoin-2015,Cheng-NewResults-1992}.
These data structures do not require that the subdivision is connected, but
they require that the subdivision is planar.  None of the known algorithms for
this problem is superior to the others.  Moreover, optimal update and
query times (or their optimal trade-offs) are not known.  The
update time or the query time (or both) is worse than
$O(\log^2 n)$, except the data structures by Arge et
al.~\cite{ABG-Improved-2006} and by Chan and
Nekrich~\cite{cn-locatoin-2015}.  The data structure by Arge at
al.~\cite{ABG-Improved-2006} supports expected $O(\log n)$ query time
and expected $O(\log^2 n/\log\log n)$ update time under Las Vegas
randomization in the RAM model.  The data structure by Chan and
Nekrich~\cite{cn-locatoin-2015} supports $O(\log n(\log \log n)^2)$
query time and $O(\log n \log \log n)$ update time in the pointer
machine model.  Their algorithm can also be modified to reduce the
query time at the expense of increasing the update time.  As pointed
out by Cheng and Janardan~\cite{Cheng-NewResults-1992}, all these data
structures~\cite{ABG-Improved-2006,BJM-Dynamic-1994,cn-locatoin-2015,Cheng-NewResults-1992}
can be used for answering point location queries if the underlying
graph of the subdivision is connected without increasing any resource.


Little has been known for the dynamic point location in general
planar subdivisions.  In fact, no nontrivial data structure
is known for this problem.\footnote{The paper~\cite{ABG-Improved-2006}
		claims that their data structure supports a point location query for
		a general subdivision. They present a vertical ray shooting data
		structure and claim that this structure supports a point location
		query for a general subdivision using the paper~\cite{overmars}.
		However, the paper~\cite{overmars} mentions that it works only for a
		subdivision such that every face in the subdivision has a constant
		complexity. Therefore, the point location problem for a general
		subdivision is still open.} Cheng and Janardan asked whether such a data structure can
be maintained for a general planar
subdivision~\cite{Cheng-NewResults-1992}, but this question has not
been resolved until now.  Very recently, it was asked again by Chan
and Nekrich~\cite{cn-locatoin-2015} and by
Snoeyink~\cite{handbook}.  Specifically, Snoeyink asked
whether it is possible to construct a dynamic data structure for a
general (possibly disconnected) planar subdivision supporting
\emph{sublinear query time} of determining if two query points
lie in the same face of the subdivision.

\subparagraph{Our result.}
In this paper, we present a data structure and its update and query
algorithms for the dynamic point location in general planar 
subdivisions under the pointer machine model. 
This is the first result supporting sublinear update and
query times, and answers the question posed
in~\cite{cn-locatoin-2015,Cheng-NewResults-1992,handbook}.  Precisely,
the amortized update time is $O(\sqrt{n}\log n(\log\log n)^{3/2})$ and the query time is
$O(\log n(\log\log n)^2)$, where $n$ is the number of edges in the
current subdivision.  
When only deletions of edges are allowed, the update and query
times are just $O(\alpha(n))$ and $O(\log n)$, respectively. Here, we assume that a
deletion operation is given with the \emph{pointer} to an
edge to be deleted in the current edge set.

Our approach itself does not require that every edge in the
subdivision is a line segment, and can handle arbitrary
curves of constant description.
However, the data structures for
dynamic vertical ray shooting queries require that every edge is a straight
line segment, which we use as a black box.  Once we have a data
structure for answering vertical ray shooting queries for general
curves, we can also extend our results to general curves.  For
instance, the result by Chan and Nekrich~\cite{cn-locatoin-2015} is
directly extended to $x$-monotone curves, and so is ours.

One may wonder if the problem is \emph{decomposable} in the sense that
a query over $D_1\cup D_2$ can be answered in constant time from the
answers from $D_1$ and $D_2$ for any pair of disjoint data sets $D_1$
and $D_2$~\cite{matousek-ept-1992}.  If a problem is decomposable, we
can obtain a dynamic data structure from a static data structure of
this problem using the framework of Bentley and
Saxe~\cite{Bentley1980}, or Overmars and
Leeuwen~\cite{ol-dynamic-1981}.  However, the dynamic point location
problem in a general planar subdivision is not decomposable. To see this,
consider a subdivision $D$ consisting of a square face and one
unbounded face. Let $D_1$ be the subdivision consisting of three edges
of the square face and $D_2$ be the subdivision consisting of the
remaining edge of the square face.  There is only one face in $D_1$
(and $D_2$). Any two points in the plane are contained in the same
face in $D_1$ (and $D_2$). But it is not the case for $D$. Therefore,
the answers from $D_1$ and $D_2$ do not help to answer point location
queries on $D$.

\subparagraph{Outline.}
Consider any two query points in the plane. Our goal is to check
whether they are in the same face of the current subdivision.  To do
this, we use the data structures for answering dynamic vertical ray shooting
queries~\cite{ABG-Improved-2006,BJM-Dynamic-1994,cn-locatoin-2015,Cheng-NewResults-1992},
and find the edges lying immediately above the two query points.  Then
we are to check whether the two edges are on the boundary of the same
face.  In general subdivisions, the boundary of each face may consist
of more than one connected components. This makes $\Theta(n)$
changes to the boundaries of the faces in the worst case, where $n$
is the number of edges in the current subdivision.  Therefore, we
cannot maintain the explicit description of the subdivision. 

To resolve this problem, we consider two different subdivisions,
$\oldm$ and $\newm$, such that 
 the current subdivision consists of the edges of $\oldm$ and $\newm$,
and construct data structures on the subdivisions,
$\oldd$ and $\newd$, respectively.  Recall that the dynamic point
location problem is not decomposable. Thus the two subdivisions
must be defined carefully.
We set each edge in the current
subdivision to be one of the three states: old, communal, and new.
Then let $\oldm$ be the subdivision induced by all old and communal
edges, and $\newm$ be the subdivision induced by all new and communal
edges.  Note that every communal edge belongs to both subdivisions.

The state of each edge is defined as follows.  The data structures are
reconstructed periodically. In specific, they are rebuilt
after processing $f(n)$ updates since the latest
reconstruction, where $n$ is the number of edges in the current
subdivision. Here, $f(n)$ is called a reconstruction period, which is
set to $\sqrt{n}$ roughly.  When an edge $e$ is inserted, we find the
face $F$ in $\oldm$ intersecting $e$ and set the old edges on the outer
boundary of $F$ to communal. If one endpoint of $e$ lies on the outer 
boundary of $F$ and the other lies on an inner boundary of $F$,
we set the old edges on this inner boundary to communal. Also, we set $e$ to new.  When an
edge $e$ is deleted, we find the faces in $\oldm$ incident to $e$ and
set the old edges of the outer boundaries of the faces to communal.

We show that the current subdivision has the following property: 
no
face in the current subdivision contains both new and old edges on its
outer boundary. In other words, 
	for every face in the current subdivision, either 
	every edge is classified as new or communal, or every edge
	is classified as old or communal. 
Due to this property, for any two query points, they
are in the same face in the current subdivision if and only if they
are in the same face in both $\oldm$ and $\newm$.  Therefore, we can
represent the name of a face in the current subdivision as a pair of
faces, one in $\oldm$ and one in $\newm$.  To answer a point location
query on the current subdivision, it suffices to find the faces
containing the query point in $\oldm$ and in $\newm$.

To answer point location queries on $\oldm$, we observe that no edge
is inserted to $\oldm$ unless it is rebuilt.  Therefore, it suffices
to construct a semi-dynamic point location data structure on $\oldm$
supporting only deletion operations.  If only deletion operations are
allowed, two faces are merged into one face, but no new face
appears. Using this property, we provide a data structure
supporting $O(\alpha(n))$ update time and $O(\log n)$ query time.

To answer point location queries on $\newm$, we make use of the
following property: the boundary of each face of $\newm$ consists of
$O(f(n))$ connected components while
the number of edges of $\newm$ is $\Theta(n)$ in the worst case,
where $n$ is the number of all edges in the current subdivision. Due
to this property, the amount of the change on the subdivision $\newm$
is $O(f(n))$ at any time.  Therefore, we can maintain the explicit
description of $\newm$.  In specific, we maintain a data structure on
$\newm$ supporting point location queries, which is indeed
a doubly connected linked list of $\newm$.

Due to lack of space, some proofs and details are omitted. The missing proofs and
missing details can be found in the full version of the paper.

\section{Preliminaries}
Consider a planar subdivision $\mm$ that consists of $n$ 
straight line segment edges.
Since the subdivision is planar, there are $O(n)$ vertices and faces.
One of the faces of $\mm$ is unbounded and all other faces are bounded.
Notice that the boundary of a face is not necessarily
connected. 
For the definitions of the faces and their boundaries, refer to~\cite[Chapter 2]{CGbook}.


We consider each edge of the subdivision as two directed
\emph{half-edges}.  The two half-edges are oriented in opposite
directions so that the face incident to a half-edge lies to the left
of it. In this way, each half-edge is incident to exactly one face,
and the orientation of each connected component of the boundary of $F$
is defined consistently.  We call a boundary component of $F$ the
\emph{outer boundary} of $F$ if it is traversed along its half-edges
incident to $F$ in counterclockwise order around $F$. Except for the
unbounded face, every face has a unique outer boundary. We call each
connected component other than the outer boundary an \emph{inner
	boundary} of $F$.  Consider the outer boundary $\gamma$ of a
face. Since $\gamma$ is a noncrossing closed curve, it subdivides the plane into regions exactly one of which contains $F$.
We say a face $F$ \emph{encloses} a set $C$ in the plane if $C$ is contained in 
the (open) region containing $F$ of the planar subdivision induced by the outer boundary of $F$.
Note that if $F$ encloses $F'$, the
outer boundary of $F$ does not intersect the boundary of $F'$.  For
more details on planar subdivisions, refer to the computational
geometry book~\cite{CGbook}.

Our results are under the pointer machine model, which is more
restrictive than the random access model.  Under the pointer machine
model, a memory cell can be accessed only through a series of pointers
while any memory cell can be accessed in constant time under the
random access model. Most of the results
in~\cite{ABG-Improved-2006,BJM-Dynamic-1994,cn-locatoin-2015,Cheng-NewResults-1992}
are under the pointer machine model, and the others are under the
random access model.

\subparagraph{Updates: insertion and deletion of edges.}  We allow two
types of update operations: $\InsertEdge(e)$ and $\DeleteEdge(e)$.  In
the course of updates, we maintain a current edge set $\ee$, which is
initially empty.  $\InsertEdge(e)$ is given with an edge $e$ such that
no endpoints of $e$ lies on an edge of the current
subdivision.  This operation adds $e$ to $\ee$, and thus update the
current subdivision accordingly.  Recall that an edge of the
subdivision is a line segment excluding its endpoints. If an endpoint
of $e$ does not lie on a vertex of the current subdivision, we also
add the endpoint of $e$ to the current subdivision as a vertex.
$\DeleteEdge(e)$ is given with an edge $e$ in the current
subdivision. Specifically, it is given with a pointer to $e$ in the
set $\ee$. This operation removes $e$ from $\ee$, and updates the
subdivision accordingly.  If an endpoint of $e$ is not incident to any
other edge of the subdivision, we also remove the vertex which is the
endpoint of $e$ from the subdivision.
%

\subparagraph{Queries.}
Our goal is to process update operations on the data structure
so that given a query point $q$ the face of the current subdivision 
containing $q$ can be computed from the data structure efficiently.
Specifically, each face is assigned a distinct
name in the subdivision, and given a query point
the name of the face containing the point is to be reported.
A query of this type is called a \emph{location query}, denoted
by $\locate(x)$ for a query point $x$ in the plane.

\subsection{Data structures}\label{sec:facetree}
In this paper, we show how to process updates and queries 
efficiently by
maintaining a few data structures for dynamic planar subdivisions.  In
specific, we use disjoint-set data structures and
concatenable queues.
Before we continue with algorithms for updates and queries, we provide
brief descriptions on these structures in the
following. 
Throughout this paper, we use $S(n), U(n)$ and $Q(n)$ to denote the
size, the update and query time of the data structures we use for the
dynamic vertical ray shooting in a general 
subdivision.
Notice that $U(n)=\Omega(\log n)$, $U(n)=o(n)$, and $Q(n)=\Omega(\log n)$
	for any nontirivial data structure  for the
	dynamic vertical ray shooting problem 
	under the pointer machine model. Also, $U(n)$ is increasing. 
	Thus in the following, we assume that $U(n)$ and $Q(n)$ satisfy
	these properties.

A disjoint-set data structure keeps track of
a set of elements partitioned into a number of
disjoint subsets~\cite{Tarjan-1975}. Each subset is represented by a rooted
tree in this data structure. 
The data structure has size linear in the total number of
elements, and can be used to check whether two elements
are in the same partition and to merge two partitions into one.
Both operations can be done in $O(\alpha(N))$ time, where $N$ is the
number of elements at the moment and $\alpha(\cdot)$
is the inverse Ackermann function.

A concatenable queue represents a sequence of elements, and allows four operations: insert an 
element, delete an element, split the sequence into two subsequences,
and concatenate two concatenable queues into
one. By implementing them with 2-3 trees~\cite{Aho}, we can support
each operation in $O(\log N)$ time, where $N$ is the number of
elements at the moment. We can search any element in
the queue in $O(\log N)$ time.

\section{Deletion-only point location}\label{sec:decremental}
In this section, we present a semi-dynamic data structure for point
location queries that allows only $\DeleteEdge$ operations.
Initially, we are given a planar subdivision consisting of $n$ edges. Then we
are given update operations $\DeleteEdge(e)$ for edges $e$ in the
subdivision one by one, and process them accordingly.
In the course of updates, we
answer point location queries.
We maintain static data structures on the initial subdivision and a disjoint-set data structure
that changes dynamically as we process $\DeleteEdge$ operations.

\subparagraph{Static data structures.} 
We construct the static point location data structure on the initial
subdivision of size $O(n)$ in $O(n\log n)$
time~\cite{ST-location-1986}. Due to this data structure, we can find
the face in the initial subdivision containing a query point in
$O(\log n)$ time.  We assign a name to each face, for instance, the
integers from $1$ to $m$ for $m$ faces.
Also, we compute the doubly connected edge list of the initial subdivision, and make each edge in the current edge set to point to its counterparts in
the doubly connected edge list.
These data structures are static, so they do not change in the course of updates.

\subparagraph{History structure of faces over updates.}
Consider an edge $e$ to be deleted from the subdivision. If $e$ is
incident to two distinct faces in the current subdivision, the faces
are merged into one and the subdivision changes accordingly.  To
apply such a change and keep track of the face information of the
subdivision, we use a disjoint-set data structure $\setS$ on the names
of the faces in the initial subdivision. Initially,
each face forms a singleton subset in $\setS$.  In the course of
updates, subsets in $\setS$ are merged.  Two elements in $\setS$ are
in the same subset of $\setS$ if and only if the two faces
corresponding to the two elements are merged into one face.

\subparagraph{Deletion.}  We are given $\DeleteEdge(e)$ for
an edge $e$ of the subdivision.  Since only history structure changes
dynamically, it suffices to update the history structure only.  We
first compute the two faces in the initial subdivision that are
incident to $e$ in $O(1)$ time by using the doubly connected edge
list.  Then we check if the faces belong to the same subset or not in
$\setS$. If they belong to two different subsets, we merge the subsets
into one in $O(\alpha(n))$ time.  The label of the root node in the
merged subset becomes the name of the merged face.  If the faces
belong to the same subset, $e$ is incident to the same face $F$ in the
current subdivision, and therefore there is no change to the faces in
the subdivision, except the removal of $e$ from the boundary of $F$.
Since we do not maintain the boundary information of faces, 
there is nothing to do with the removal and we do not do anything on $\setS$.
Thus, there is a bijection between faces in the current subdivision
and subsets in the disjoint-set data structure $\setS$. We say that
the face corresponding to the root of a subset \emph{represents} the
subset.

\subparagraph{Location queries.}  To answer $\locate(x)$ for a query point $x$
in the plane, we find the face $F$ in the initial subdivision in
$O(\log n)$ time. Then we return the subset in the disjoint-set data
structure $\setS$ that contains $F$ in $O(\alpha(n))$ time. Precisely,
we return the root of the subset containing $F$ whose label is the
name of the face containing $x$ in the current subdivision.
The argument in this section implies the correctness of the query algorithm.

\begin{theorem}\label{thm:decrement}
	Given a planar subdivision consisting of $n$ edges,
	we can construct data structures of size $O(n)$ in $O(n\log n)$ time so
	that $\locate(x)$ can be answered in $O(\log n)$
	time for any point $x$ in the plane and the data
	structures can be updated in $O(\alpha(n))$ time for a
	deletion of an edge from the subdivision. 
\end{theorem}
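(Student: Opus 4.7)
The plan is essentially to verify that the construction described in the section meets every bound claimed in the theorem, and to confirm the correctness invariant used by the query algorithm. Because insertions are disallowed, faces of the initial subdivision can only coalesce, never split, so every face $F$ of the current subdivision corresponds to a union of faces of the initial subdivision; the disjoint-set data structure $\setS$ is exactly designed to track this union.

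First I would bound the preprocessing. I would build the static Sarnak--Tarjan point location structure on the initial subdivision in $O(n\log n)$ time and $O(n)$ space, supporting $O(\log n)$-time queries. Then I would construct the DCEL of the initial subdivision in $O(n)$ time and space, and install pointers between each edge of the current edge set $\ee$ and its two half-edges in the DCEL, and between each face in the DCEL and its singleton in $\setS$; this is $O(n)$ time. Initializing $\setS$ with one singleton per face costs $O(n)$.

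Next I would verify the update bound. A call $\DeleteEdge(e)$ is given a pointer to $e$, from which the two incident half-edges, and hence the two faces $F_1,F_2$ of the initial subdivision bordering $e$, are obtained in $O(1)$ time via the DCEL. A single find on each of $F_1$ and $F_2$ costs $O(\alpha(n))$; if they lie in different subsets we perform one union, also $O(\alpha(n))$; otherwise we do nothing. This gives the claimed $O(\alpha(n))$ amortized update time.

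For queries, $\locate(x)$ first locates the face $F$ of the initial subdivision containing $x$ in $O(\log n)$ time via the static structure, then returns the label of the root of the subset of $\setS$ containing $F$ in $O(\alpha(n))\subseteq O(\log n)$ time. The only thing that needs justification is the key invariant: \emph{after any sequence of deletions, two initial faces lie in the same subset of $\setS$ if and only if they lie in the same face of the current subdivision.} I would prove this by induction on the number of deletions. Initially each initial face is its own face in the current subdivision, so the invariant holds. On deletion of $e$ with incident initial faces $F_1,F_2$, either (i) $e$ lies on the boundary of two distinct current faces, in which case these two current faces merge and exactly $F_1$ and $F_2$ lie on opposite sides of $e$ and in different subsets, so a single union restores the invariant; or (ii) $e$ lies on the boundary of a single current face (because $e$ is an interior edge of an already-merged region, or a bridge whose removal does not separate), in which case no faces merge and $F_1,F_2$ already lie in the same subset, so doing nothing preserves the invariant. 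The only subtle case is (ii); the main obstacle in a fully rigorous write-up is to argue cleanly that this dichotomy is correct, which follows from the observation that the faces of the current subdivision are precisely the connected components of the plane minus the remaining edges, and that removing an edge either reduces the number of components of the complement of the edge set by one (case i) or leaves it unchanged (case ii). Combining this invariant with the bounds above yields the theorem.
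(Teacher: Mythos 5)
Your proposal is correct and follows essentially the same approach as the paper: a static Sarnak--Tarjan structure plus a DCEL over the initial subdivision, together with a disjoint-set structure over initial faces that records which of them have coalesced, with a find/union on the two DCEL-incident faces per deletion. The only addition beyond the paper's treatment is your explicit induction argument for the invariant relating subsets of $\setS$ to current faces, which the paper leaves implicit but which is a sound (and welcome) elaboration.
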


\section{Data structures for fully dynamic point location}\label{sec:ds}
In Section~\ref{sec:ds} and Section~\ref{sec:update}, we present a
data structure and its corresponding update and query algorithms
for the dynamic point location in fully dynamic planar
subdivisions.  Initially, the subdivision is the whole plane.
While we process a mixed sequence of insertions and deletions of
edges, we maintain two data structures, one containing \emph{old and
	communal edges} and one containing \emph{new and communal edges}.
We consider each edge of the subdivision to have one of three states,
``new'', ``communal'', and ``old''. The first data structure, denoted
by $\oldd$, is the point location data structure on old and communal
edges that supports only $\DeleteEdge$ operations described in
Section~\ref{sec:decremental}. The second data structure, denoted by
$\newd$, is a fully dynamic point location data structure on new and
communal edges.

\begin{figure}
	\begin{center}
		\includegraphics[width=0.7\textwidth]{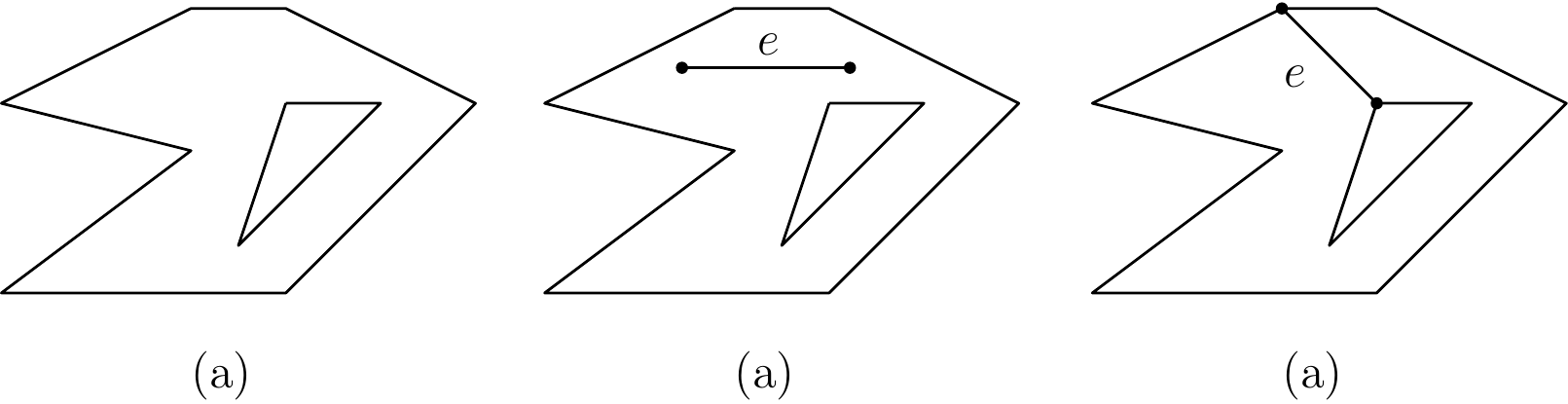}
		\caption{\small(a) Face containing only one inner boundary.
			(b) The insertion of $e$ makes the old edges on the outer boundary communal.
			(c) The insertion of $e$ makes all old boundary edges communal. \label{fig:com}}
	\end{center}
\end{figure}

\subparagraph{Three states: old, communal and new.}
We rebuild both data structures
periodically. When they are rebuilt, every edge in the current
subdivision is set to old. As we process updates, some of them
are set to communal as follows. 
For $\InsertEdge(e)$,
there is exactly one face $F$ of $\oldm$ whose interior is intersected by $e$
as $e$ does not intersect any edges or vertices. 
We set all edges on the outer boundary of $F$ to communal.
If $e$ connects the outer boundary of $F$ with one inner boundary of $F$,
we set all edges on the inner boundary to communal.
As a result, the outer boundary edges in the faces incident to $e$ in $\oldm$
are communal or new after $e$ is inserted. See
Figure~\ref{fig:com}.
For $\DeleteEdge(e)$,
there are at most two faces of $\oldm$ whose boundaries contain $e$. 
We set all edges on the outer boundary of these faces to communal.
Here, we do not maintain the explicit description (the doubly connected edge list) of $\oldm$,
but maintain the semi-dynamic data structure on $\oldm$ described in Section~\ref{sec:decremental}. 

Also, the edges inserted after the latest reconstruction are set to new.
The subdivision $\newm$ of the new and communal edges has complexity of $\Theta(n)$ in the 
worst case. We maintain a fully dynamic point location data structure on $\newm$,
which is indeed the explicit description (the doubly connected edge list) of $\newm$.

\subsection{Reconstruction}
Let $f: \mathbb{N}\rightarrow \mathbb{N}$ be an increasing function
satisfying that $f(n)/2\leq f(n/2)\leq n/4$ for every $n$ larger than
a constant, which will be specified later. We call the function a
\emph{reconstruction period}.  We reconstruct $\oldd$ and $\newd$ if
we have processed $f(n)$ updates since the latest reconstruction time,
where $n$ is the number of the edges in the current subdivision.

The following lemma is a key to achieve an efficient update time.
Each face of $\newm$ has $O(f(n))$ boundary components 
while the number of edge in $\newm$ 
is $\Theta(n)$ in the worst case.

\begin{lemma}\label{lem:new-face}
	Each face of $\newm$ has $O(f(n))$ inner boundaries.
\end{lemma}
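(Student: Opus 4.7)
The plan is to bound the number of inner boundaries in two stages: first show that $\newm$ has only $O(f(n))$ connected components (as an abstract planar graph) at any time between reconstructions, and then convert that into a per-face bound by a standard Euler-characteristic count.

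For the first stage, note that immediately after a reconstruction every edge of the current subdivision has been set to old, so $\newm$ is empty and has zero connected components. I would then analyze how each update perturbs this count. For $\InsertEdge(e)$, the new edge $e$ is added to $\newm$ and, moreover, the outer boundary of the unique $\oldm$-face $F$ crossed by $e$, and possibly one inner boundary of $F$, are promoted to communal, so all their edges end up in $\newm$. Each such boundary, viewed as a subgraph of $\oldm$, is connected, so adding its edges to $\newm$ increases the number of connected components of $\newm$ by at most one (it may instead merge several preexisting components). The insertion of the single edge $e$ itself changes the component count by at most one. A $\DeleteEdge(e)$ may remove $e$ from $\newm$ (raising the component count by at most one) and promotes the outer boundaries of at most two $\oldm$-faces to communal, each contributing at most one new component by the same argument. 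Thus each update changes the number of connected components of $\newm$ by $O(1)$, and after at most $f(n)$ updates since the last reconstruction the total component count is $O(f(n))$.

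For the second stage, let $c$ denote the number of connected components of $\newm$, with $v$ vertices, $e$ edges, and $f$ faces. Euler's formula gives $v - e + f = 1 + c$, and a standard derivation yields that the total number of boundary walks summed over all faces of $\newm$ equals $f + c - 1$. Each bounded face contributes exactly one outer boundary to this sum and the unbounded face contributes none, so the total number of inner boundaries summed over all faces of $\newm$ equals $(f + c - 1) - (f - 1) = c$. Consequently every single face of $\newm$ has at most $c = O(f(n))$ inner boundaries, which establishes the lemma.

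The main obstacle lies in the first stage: when we promote the outer boundary of $F$ to communal, some of its edges may already be communal and hence already in $\newm$, so the edges truly inserted into $\newm$ at this step form only a subgraph of the boundary and need not themselves be connected. The right way to justify the ``at most one new component per boundary promotion'' claim is to observe that after the promotion the entire outer boundary of $F$ lies in $\newm$ and is connected there; hence any previously separate components of $\newm$ meeting this boundary are merged together, and at most one genuinely new connected component can appear.
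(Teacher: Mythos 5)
Your proof is correct, and it takes a genuinely different route from the paper's. The paper argues face-by-face: it fixes a face $F$ of $\newm$, distinguishes inner boundaries that contain a new edge from those consisting only of communal edges, and bounds each kind by a charging argument (inner boundaries of the second kind are charged to the $O(f(n))$ pairs $(F',e')$ that made edges communal, with the twist that one pair can produce several inner boundaries, which is then controlled by counting how many edges were removed from the cycle $\gamma$). Your argument instead controls a global quantity, the number of connected components of the graph underlying $\newm$: it is zero right after a reconstruction, each update adds one edge, removes at most one edge, and adds at most two sets whose closures are connected once fully present in $\newm$, so the component count changes by $O(1)$ per update and is $O(f(n))$ overall. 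Your Euler-count (or equivalently, the bijection between connected components and inner boundaries, mapping each component to the inner boundary where its outermost cycle appears) then gives that the \emph{total} number of inner boundaries over \emph{all} faces of $\newm$ equals the component count, which is stronger than the per-face bound the lemma states. The self-correction in your last paragraph is exactly the right fix and is the one point where a careless reading would go wrong: the newly inserted old-edges of $\gamma$ need not be connected among themselves, but the whole of $\gamma$ is present and connected in $\newm$ after the promotion, so it can spawn at most one new component. Overall your approach is cleaner and more structural; the paper's approach is more local and avoids invoking any Euler-type identity.
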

\begin{proof} 
	Consider a face $F$ of $\newm$.
	There are two types of the inner boundaries of $F$: either all edges on 
	an inner boundary are communal or at least one edge on an inner boundary is new. For an
	inner boundary of the second type, all edges other than the new edges are communal.  By the construction,
	the number of new edges in $\newm$ is at most $f(n)$.  Recall that
	when we rebuild the data structures, $\oldm$ and $\newm$, all edges
	are set to old.
	
	We pick an arbitrary edge on each inner boundary of $F$
	of the first type, and call it the \emph{representative} of the 
	inner boundary.
	Each representative 
	is inserted before the latest reconstruction,
	and is set to communal later.
	It becomes communal due to a pair $(F',e')$, where $F'$ is a face of
	$\oldm$ and $e'$ is an edge inserted or deleted after the latest
	reconstruction, such that the insertion or deletion of $e'$ makes
	the edges on a boundary component of $F'$ communal, and $e$ was on this boundary component of $F'$
	at that moment.  See
	Figure~\ref{fig:number-face}. 
	If the representatives of all first-type inner boundaries of $F$ are induced by distinct pairs,
	it is clear that the number of the first-type inner boundaries of $F$ is $O(f(n))$.
	But it is possible that the representative of some inner boundaries of $F$ are induced
	by the same pair $(F',e')$. 
	
	Consider the representatives of some inner boundaries of $F$ that are induced by the same pair $(F',e')$.
	The insertion of deletion of $e'$ makes the outer boundary 
	of $F'$ become communal. In the case that $e'$ connects the outer boundary of $F'$ and an
	inner boundary of $F'$, let $\gamma$ be the cycle consisting of the outer boundary of $F'$, 
	the inner boundary of $F'$ and $e'$. Let $\gamma$ be the outer boundary of $F'$, otherwise. 
	All representatives induced by $(F',e')$ are on $\gamma$,
	and therefore they are connected after $e'$ is inserted or before $e'$ is deleted.
	Notice that any two of such representatives are disconnected later.
	This means that $\gamma$ becomes at least $t$ connected components due to the removal of $t$ edges on it,
	where $t$ is the number of the representatives induced by $(F',e')$. 
	The total number of edges that are removed after the latest reconstruction is $O(f(n))$,
	and each edge that are removed after the latest reconstruction can
	be a representative of at most two first-type inner boundaries of $F$. Therefore, the total number of the representatives of the first-type inner boundaries of $F$ is
	also $O(f(n))$.
	
	Consider an inner boundary of the second type. Since each edge is incident
	to at most one inner boundary of $F$, the number of the second-type inner boundaries
	is at most the number of new edges. Therefore,
	there are $O(f(n))$ second-type inner boundaries of $\newm$.
\end{proof}

	\begin{figure}
	\begin{center}
		\includegraphics[width=0.55\textwidth]{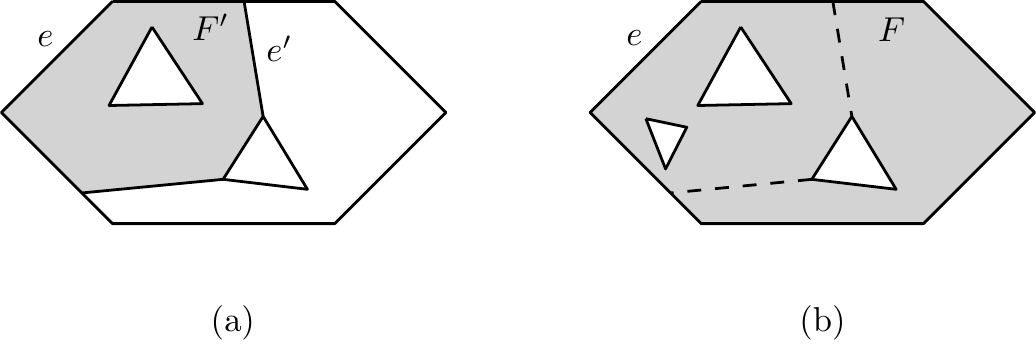}
		\caption{\small(a) Subdivision $\oldm$. All edges are old. (b)
			Subdivision $\newm$. The two dashed edges are deleted after
			the reconstruction, which makes all outer boundary edges
			become communal. Then the edges on the leftmost triangle
			(hole) are inserted. \label{fig:number-face}}
	\end{center}
\end{figure}

\subsection{Two data structures}
We maintain two data structures: $\oldd$, a semi-dynamic point location for old
and communal edges, and $\newd$, a fully dynamic point location for new and communal edges.
In this subsection, we describe the data structures $\oldd$ and $\newd$.
The update procedures are described in Section~\ref{sec:update}.

\subparagraph{Semi-dynamic point location for old and communal edges.}
After each reconstruction, we
construct the point location data structure $\oldd$ supporting only
$\DeleteEdge$ described in Section~\ref{sec:decremental} for all edges
in the current subdivision, which takes $O(n\log n)$ time. Recall that
all edges in the current subdivision are old at this moment.  In
Section~\ref{sec:update}, we will see that the amortized time for
reconstructing $\oldd$ is $O(n\log n/f(n))$ at any moment, where $n$
is the number of all edges in the subdivision at the moment.
As update operations are processed, some old or communal edges are deleted,
and thus we remove them from $\oldd$. Notice that no edge is inserted 
to $\oldd$ by the definition of old and communal edges.

In addition to this, we store the old edges on each boundary component of the  
faces of $\oldm$ in a concatenable queue. Notice that such edges are not necessarily 
contiguous on the boundary component. In spite of this fact, 
we can traverse the old edges along a boundary component of each face of $\oldm$
in time linear in the number of the old edges due to the concatenable queu for the old edges.

\subparagraph{Fully dynamic point location for new and communal edges.}
\label{subsec:new-ds}
Let $\newce$ be the set of all new and communal edges and
$\newm$ be the subdivision induced by $\newce$.
Also, let $\oldce$ denote the set
of all old and communal edges and $\oldm$ be the subdivision induced
by $\oldce$.

We maintain a dynamic data structure that supports vertical
ray-shooting queries for $\newce$.  The update time $U(n)$ is
$O(\log n\log\log n)$ and the query time $Q(n)$ is
$O(\log n(\log\log n)^2)$ if we use the data structure by Chan and
Nekrich~\cite{cn-locatoin-2015}. Or, there are alternative data
structures with different update and query
times~\cite{ABG-Improved-2006,BJM-Dynamic-1994,Cheng-NewResults-1992}.

We also maintain the boundary of each face $F$ of $\newm$.
We store each connected component of the boundary of $F$ in a concatenable queue.
More specifically, a concatenable queue represents a cyclic sequence of 
the edges in a connected component of the boundary of $F$.
Since $e$ is incident to at most two
faces of $\newm$, there are at most two such elements in the
queues. We implement the concatenable queues using the 2-3 trees. 
We choose an element in each queue and call it the \emph{root} of the queue. For a concatenable queue implemented by a 2-3 tree,
	we choose the root of the 2-3 tree as the root of the queue.
	Given any element of a queue, we can access the root of 
	the queue in $O(\log n)$ time. 
For an inner boundary of a face $F$ of $\newm$, we let the root of the queue
for this inner boundary point to the root of the queue for the outer boundary of $F$. We also make the root of
the queue for the outer boundary of $F$ point to the root of the queue for all inner 
boundaries of $F$.
Also, we let each edge of $\newce$ point to its corresponding elements in
the queues.

We maintain a balanced binary search tree on the vertices of $\newm$ sorted in a
lexicographical order so that we can check whether a point in the plane is a
vertex of $\newm$ in $O(\log n)$ time.  Also, for each vertex of $\newm$,
we maintain a balanced binary search tree on the edges incident to it in $\newm$
in clockwise order around it. The update procedure of this data structure is straightforward,
and the update time is subsumed by the time for maintaining
the boundaries of the faces of $\newm$. Thus, in the following, we do not
mention the update of this structure.

\begin{lemma}\label{lem:size-ds-new}
	The data structures $\oldd$ and $\newd$ have size $O(n)$.
\end{lemma}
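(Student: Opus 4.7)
The plan is to bound each substructure of $\oldd$ and $\newd$ separately by $O(n)$, where $n$ is the current number of edges. The only real conversion needed is from quantities measured at the latest reconstruction (and from $|\newce|$) to bounds in terms of the current~$n$.

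The first step is to show that if $n_0$ denotes the number of edges in the subdivision at the latest reconstruction, then $n_0 = \Theta(n)$. Since at most $f(n)$ updates have been processed since that moment, we have $|n - n_0|\le f(n)$, and the standing assumption $f(n/2)\le n/4$ implies $f(n)\le n/2$, so $n/2 \le n_0 \le 3n/2$. Consequently any $O(n_0)$ bound is automatically also $O(n)$, which I will use throughout.

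For $\oldd$, every substructure was built at reconstruction time on a subdivision of $n_0$ edges and only shrinks thereafter. By Theorem~\ref{thm:decrement} and the descriptions in Section~\ref{sec:decremental} and in the current section, the static point location structure, the doubly connected edge list of the reconstructed subdivision, and the disjoint-set structure $\setS$ each have size $O(n_0)$; and the concatenable queues storing the old edges along the boundary components of faces of $\oldm$ together have size $O(n_0)$, since each old edge contributes to at most two such queues. Summing yields $|\oldd|=O(n_0)=O(n)$.

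For $\newd$, I would first establish $|\newce|=O(n)$: every new edge was inserted since the latest reconstruction, giving at most $f(n)=O(n)$ of them, and every communal edge was already an old edge at that reconstruction, giving at most $n_0=O(n)$ of them. Given this, each component of $\newd$ is linear in $|\newce|$. The dynamic vertical ray-shooting structure occupies $S(|\newce|)=O(|\newce|)$ space under the black-box assumption that it has linear size; the concatenable queues for the boundary components of $\newm$ contain at most $2|\newce|$ edge-occurrences in total, together with $O(1)$ extra pointers per inner boundary for the outer-inner linkage, whose sum over all faces is $O(|\newce|)$ by planarity; and the balanced BST on the vertices of $\newm$ and the per-vertex BSTs on incident edges also sum to $O(|\newce|)$ by planarity. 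Summing yields $|\newd|=O(|\newce|)=O(n)$. I do not anticipate a genuine obstacle: the lemma is essentially a bookkeeping accounting, and the only mildly nontrivial point is the observation in Step~1 that the reconstruction rule forces $n_0=\Theta(n)$, which prevents the historical count $n_0$ and the present count $n$ from diverging by more than a constant factor.
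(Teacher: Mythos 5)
The paper defers the proof of this lemma to its full version, so there is no written proof in the text to compare against; your accounting is correct and is surely what the omitted proof does. The only nontrivial normalization is exactly the one you isolate in Step~1: the rebuild rule plus the stipulation $f(n/2)\le n/4$ (hence $f(n)\le n/2$) give $|n-n_0|\le f(n)\le n/2$, so $n_0=\Theta(n)$ and $|\newce|\le f(n)+n_0=O(n)$; after that, every component of $\oldd$ is $O(n_0)$ (built on $n_0$ edges and only shrinking), and every component of $\newd$ --- the ray-shooting structure, the concatenable queues (each edge appears in at most two), the inner/outer-boundary root pointers (one per boundary component, $O(|\newce|)$ by planarity), and the vertex and per-vertex incidence BSTs --- is $O(|\newce|)$. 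You are also right to flag the one hidden assumption, namely that the black-box vertical-ray-shooting structure has linear size $S(n)=O(n)$ (true for the cited structures such as Chan--Nekrich); this is precisely why the paper's final theorem phrases the overall space as $O(S(n))$, and the $O(n)$ in this lemma is implicitly specialized to that linear-size case.
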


\section{Update procedures for fully dynamic point location}\label{sec:update}
We have two update operations: $\InsertEdge(e)$ and $\DeleteEdge(e)$.
Recall that we rebuild the data structures periodically. More precisely, we
reconstruct the data structures if we have processed $f(n)$ updates since the latest
reconstruction time, where
$n$ is the number of the edges we have at the moment.  After the
reconstruction, the data structure $\newd$ becomes empty.  This is
simply because the reconstruction resets all edges to old.  For the
data structure $\oldd$, we will show that the amortized time for
reconstruction is $O(n\log n/f(n))$. Also, this data structure is
updated as some old or communal edges are deleted.

In this section, we present a procedure for updates of the two data
structures. Recall that we use $\oldm$ to denote the subdivision induced by the
old and communal edges, and $\newm$ to denote the subdivision induced
by the new and communal edges. We use the subdivisions,
$\oldm$ and $\newm$, only for description purpose, and we do not
maintain them.

\subsection{Common procedure for edge insertions and edge deletions}
We are given operation $\InsertEdge(e)$ or $\DeleteEdge(e)$ for an
edge $e$.  Recall that we construct $\oldd$ and $\newd$ periodically.
The reconstruction period $f: \mathbb{N}\rightarrow \mathbb{N}$ is an
increasing function satisfying that $f(n)/2\leq f(n/2)\leq n/4$ for
every $n$ larger than a constant. 

During the process, we receive update operations.
We use an integer
index $i$ to denote the interval between the time when we receive
the $i$th operation and the time when we receive the $(i+1)$th
operation.
We consider two consecutive reconstructions that occur at time
$i$ and at time $j$.
Let $n_i$ and $n_j$ denote the numbers of edges at times $i$ and
$j$, respectively.

\begin{lemma} \label{lem:update-old} The amortized reconstruction time
	of $\oldd$ is $O(n\log n/f(n))$, where $n$ is the number of all
	edges at the moment.
\end{lemma}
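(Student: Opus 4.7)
The plan is to bound the cost of a single reconstruction by Theorem~\ref{thm:decrement} and then distribute that cost across the updates between two consecutive reconstructions, showing that the edge count is essentially stable across such an interval so the charging comes out correctly.

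First I would note that each reconstruction at time $j$ rebuilds the semi-dynamic point location structure on the current set of all edges (now all reset to old), which by Theorem~\ref{thm:decrement} costs $O(n_j\log n_j)$ time. By the definition of the reconstruction period, the number of updates between the reconstructions at times $i$ and $j$ is $j-i=\Theta(f(n_i))$, since the trigger $f(n)$ varies only mildly with the current $n$ on account of the assumptions on $f$.

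The key step is to show that $n_i$ and $n_j$ differ by at most a constant factor, so that $f(n_j)=\Theta(f(n_i))$ and the charge per update does not depend on which of $n_i,n_j$ we name. From the two conditions on $f$, one obtains
\[
f(n)\;\le\;2\,f(n/2)\;\le\;n/2,
\]
so at most $n_i/2$ updates are processed between the two reconstructions, and since each update changes the edge count by at most one we get $n_i/2\le n_j\le 3n_i/2$. Monotonicity of $f$ and the inequality $f(2m)\le 2f(m)$ (which is just the first condition rewritten) then yield $f(n_j)/2\le f(n_i)\le 2f(n_j)$, i.e.\ $f(n_j)=\Theta(f(n_i))$, and likewise $n_j\log n_j=\Theta(n_i\log n_i)$.

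Charging the reconstruction cost $O(n_j\log n_j)$ evenly to the $\Theta(f(n_i))$ updates in the interval yields an amortized contribution of
\[
O\!\left(\frac{n_j\log n_j}{f(n_i)}\right)=O\!\left(\frac{n\log n}{f(n)}\right)
\]
for every update in the interval, where $n$ denotes the current number of edges at that update (this is well-defined since $n$ is within a constant factor of both $n_i$ and $n_j$ throughout). Summing over all reconstruction intervals gives the bound stated in the lemma. The only delicate point — and what I expect to be the main obstacle — is the careful handling of the fact that the trigger $f(n)$ is evaluated with a moving $n$, which is exactly why the two axioms $f(n)/2\le f(n/2)\le n/4$ were imposed: the first keeps $f$ from shrinking too fast as edges are deleted, and the second keeps the interval short enough that $n$ cannot drift by more than a constant factor before the next reconstruction.
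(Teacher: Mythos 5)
Your overall strategy is exactly the paper's: bound the cost of a reconstruction by $O(n_j\log n_j)$, spread it over the $j-i$ updates between two consecutive reconstructions, and use the two axioms on $f$ to show that the edge count and hence $f$ do not drift by more than a constant factor across the interval. The interpretation of the two axioms at the end is also the intended one. However, there is a concrete error in the middle step.

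The claim that ``at most $n_i/2$ updates are processed between the two reconstructions'' does not follow from $f(n)\le n/2$ and is in fact false. The number of updates is $j-i=f(n_j)$, where $n_j$ is the edge count \emph{at the time of reconstruction}, not $n_i$. Take $f(m)=m/2$ (which satisfies both axioms) and suppose all $j-i$ updates are insertions, so that $n_j=n_i+(j-i)$. The trigger $j-i\ge f(n_j)=(n_i+(j-i))/2$ first fires at $j-i=n_i$, giving $n_j=2n_i$: the interval contains $n_i$ updates, not $n_i/2$, and $n_j$ lies outside your claimed range $[n_i/2,\,3n_i/2]$. The correct argument, which the paper carries out, proceeds in the other direction: first use $j-i=f(n_j)\le n_j/2$ together with $|n-n_j|\le j-i$ (for $n$ the count at any moment in the interval) to bound $n_j$ in terms of $n$ — this gives $n_j\le 2n$ and (either by the paper's contradiction argument or directly from $n\le n_j+f(n_j)\le \tfrac{3}{2}n_j$) a matching lower bound — and only then conclude $f(n_j)=\Theta(f(n))$ via monotonicity and $f(n)/2\le f(n/2)$. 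You identified exactly this as ``the main obstacle'' in your last paragraph but then hand-waved through it; the asymptotic conclusion $f(n_j)=\Theta(f(n_i))$ is still true, but your derivation as written is circular (it assumes $n_j$ is already known to be close to $n_i$). Also note that the amortized charge must hold for $n$ at every moment in the interval, not just at $n_i$; the paper phrases the whole bound in terms of an arbitrary $n$ in the interval for precisely this reason, and you should too rather than relegating it to a parenthetical remark.
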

\begin{proof}
	Consider the two consecutive reconstructions occur at time stamps
	$ i$ and $ j$ with $ i< j$.
	The amount of time for the reconstruction at time $ j$ is
	$O(n_j\log n_j)$.  Imagine that the reconstruction time is
	distributed equally to the update operations occurring between
	$ i+1$ to $ j$.  Then the amortized reconstruction time per update
	operation in this period is $O(n_j\log n_j / ( j- i))$.
	
	We claim that $O(n_j\log n_j / ( j- i))$ is $O(n\log n/ f(n))$,
	where $n$ is the number of edges at any fixed time in the period
	from time $ i+1$ to time $ j$.  By the reconstruction scheme, we have
	$f(n_j)= j- i$.  We also have $n-f(n_j) \leq n_j \leq n+f(n_j)$
	because each update operation inserts or deletes exactly one edge to
	the current subdivision.  Since $f(n_j) \leq n_j/2$, we have
	$n_j\leq 2n$.  Thus, we have
	$n_j\log n_j/( j- i) = n_j\log n_j / f(n_j) \leq 2n\log
	2n/f(n_j).$
	
	The only remaining thing is to show that $f(n_j) \geq c\cdot f(n)$
	for some constant $c>0$.  We first claim that $n_j\geq n/2$. Assume
	to the contrary that $n_j< n/2$.  Since $n-f(n_j) \leq n_j$ and
	$f(\cdot)$ is increasing, we have $n/2 < f(n_j)< f(n/2)$, which
	contradicts that $f$ satisfies $f(n/2)\leq n/4$. Therefore,
	$n_j\geq n/2$. Since $f(\cdot)$ is increasing, we have
	$f(n_j) \geq f(n/2) \geq f(n)/2$.  Therefore, the lemma holds.
\end{proof}

The insertion or deletion sets some old edges to communal.  By applying
a point location query for an endpoint of $e$ in $\oldm$, we find
the faces $F$ of $\oldm$ such that the boundary of $F$ contains an
endpoint of $e$ or the interior of $F$ is intersected by $e$.  All
edges lying on the outer boundary and at most one inner boundary of $F$ 
become communal.  We insert
them and $e$ to the data structure $\newd$ for vertical ray shooting
queries. This takes $O(N\cdot U(n))$ time, where $N$ denotes the
number of all edges inserted to the data structure. 
It is possible that some edges of the faces are already communal.  In
this case, we avoid removing (also accessing) 
such edges by using the concatenable queue representing the cyclic 
sequence of the old edges on each boundary component of $F$.
%

\begin{lemma}\label{lem:amortized-number}
	The average number of old edges which are set to communal
	is $O(n/f(n))$ at any moment, where $n$ is all edges at the
	moment.
\end{lemma}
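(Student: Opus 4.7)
The plan is to amortize by observing that the transition from \emph{old} to \emph{communal} is monotone: once an edge becomes communal, it remains so until the next reconstruction, when every edge is reset to \emph{old}. Thus, between two consecutive reconstructions, each individual edge can be set to communal at most once, and the total number of such transitions in the period is bounded by the total number of old edges present at the start of the period.

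More concretely, I would fix two consecutive reconstruction times $i < j$, and let $n_i$ and $n_j$ denote the number of edges in the current subdivision at those times. Right after the reconstruction at time $i$, every edge is old, and the number of such edges is exactly $n_i$. Any edge inserted in the interval $(i,j]$ is set to new, not old, so it is never counted as an old-to-communal transition. Therefore the total number of edges that are set to communal during the interval $(i,j]$ is at most $n_i$, and distributing this cost equally over the $j-i$ update operations in the interval yields an amortized cost of at most $n_i/(j-i)$ per operation.

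To conclude, I would reuse the arithmetic from the proof of Lemma~\ref{lem:update-old}. For any fixed moment in the interval $(i,j]$ with edge count $n$, the inequality $|n - n_j| \leq f(n_j) = j-i$ (since each update changes the size by one) combined with $f(n_j) \leq n_j/2$ gives $n_j \leq 2n$, and hence $n_i \leq n_j + f(n_j) \leq 2n_j \leq 4n$. The same argument that yielded $f(n_j) \geq f(n)/2$ in Lemma~\ref{lem:update-old} (using $n_j \geq n/2$, which follows from $f(n/2) \leq n/4$) then bounds the amortized cost per update by
\[
\frac{n_i}{j-i} \;=\; \frac{n_i}{f(n_j)} \;\leq\; \frac{4n}{f(n)/2} \;=\; O\!\left(\frac{n}{f(n)}\right),
\]
as claimed.

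The only subtlety is verifying that repeated attempts to set an already-communal edge to communal do not inflate the count: this is handled by the data-structure design mentioned just before the lemma, where the concatenable queue on each boundary component stores only the \emph{old} edges, so when we scan the outer (or inner) boundary of $F$ in $\oldm$ we touch exactly those edges that are currently old, and each such edge is charged to its unique old-to-communal transition. Given this, the main work of the proof is purely the arithmetic above, which mirrors Lemma~\ref{lem:update-old} and presents no real obstacle.
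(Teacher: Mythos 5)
Your proof is correct and follows essentially the same approach as the paper's: charge each old-to-communal transition once per reconstruction period (at most $n_i$ total), amortize over the $j-i = f(n_j)$ updates, and then reuse the arithmetic relating $n_i$, $n_j$, $n$, and $f(\cdot)$ from Lemma~\ref{lem:update-old}. Your constants are slightly looser ($n_i \leq 4n$ versus the paper's $n_i \leq 2n$), but this does not affect the asymptotic bound, and your closing remark about the concatenable queue preventing double-counting of already-communal edges is a correct and useful observation that the paper leaves implicit.
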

\begin{proof}
	Consider the two consecutive reconstructions occur at time stamps
	$ i$ and $ j$ with $ i< j$.  
	In the period from time $ i$ to $ j$, at most $n_i$
	old edges are set to communal in total because
	each edge we have at time $ i$ is set to communal at most
	once. Imagine that the number of old edges which are set to communal is
	distributed equally to the update operations occurring
	between $ i+1$ to $ j$. 
	Then the average number of old edges which are set to
	communal per update operation in this period is $O(n_i / ( j- i))$.
	
	We claim that $n_i / ( j- i)=O(n/f(n))$,
	where $n$ is the number of edges at any fixed time in the
	period from time $ i+1$ to time $ j$.  By the
	reconstruction scheme, we have $f(n_j)= j- i$.  We also have
	$n-f(n_j) \leq n_i \leq n+f(n_j)$
	because each update operation
	inserts or deletes exactly one edge.  Since $f(n_j) \leq n_j/2$ and $n/2\leq n_j\leq 2n$ as we already
	showed in Lemma~\ref{lem:update-old}, we
	have $n_i\leq 2n$.  Thus, we have
	$n_i /( j- i) = n_i / f(n_j) \leq
	2n/ f(n_j)=O(n /f(n))$.
\end{proof}

\begin{corollary}\label{cor:amortized-insertion}
	The amortized time for inserting new and communal edges to the
	vertical ray shooting data structure in $\newd$ is
	$O(n\cdot U(n)/f(n))$.
\end{corollary}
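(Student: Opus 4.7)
The plan is to charge the insertion cost directly against the transitions of edges into the set $\newce$ of new and communal edges, and then invoke Lemma~\ref{lem:amortized-number} to amortize.

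First I would observe that in handling $\InsertEdge(e)$ or $\DeleteEdge(e)$, the edges actually inserted into the vertical ray shooting data structure of $\newd$ are precisely those that transition from ``not in $\newce$'' to ``in $\newce$'' during this operation. These are (i) the edge $e$ itself in the case of $\InsertEdge(e)$, and (ii) the old edges on the outer boundary (and possibly one inner boundary) of the face $F$ of $\oldm$ identified by the common procedure that are \emph{still old} at this moment. Using the concatenable queue storing the old edges along each boundary component of $F$, we can enumerate exactly these still-old edges in time linear in their number, avoiding any touch of the already-communal edges. Hence the number of vertical ray shooting insertions triggered by a single update equals the number of edges newly becoming communal, plus at most one for the newly inserted edge.

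Next I would multiply by the per-insertion cost. Each insertion into the vertical ray shooting structure takes $U(n)$ time, so the insertion cost incurred by a single update is $(k+1)\cdot U(n)$, where $k$ is the number of old edges set to communal by that update.

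Finally I would apply Lemma~\ref{lem:amortized-number}, which states that the amortized value of $k$ per update is $O(n/f(n))$, to obtain an amortized bound of $O((n/f(n))\cdot U(n))$ on the insertion cost per update. The extra $+1$ (for the inserted edge itself) contributes only $O(U(n))$ per update, which is subsumed by $O(n\cdot U(n)/f(n))$ since $f(n)\le n/2$ for large enough $n$ by the hypothesis on the reconstruction period. This yields the claimed bound of $O(n\cdot U(n)/f(n))$. The only subtlety to be careful about is step one, namely justifying that the concatenable queue of old edges on each boundary component lets us pay only for edges actually entering $\newce$, so that our count matches exactly the quantity bounded by Lemma~\ref{lem:amortized-number}; beyond that, the argument is a direct amortization.
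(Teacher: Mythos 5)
Your approach matches the paper's in spirit: charge the vertical ray shooting insertions to the edges entering $\newce$, bound the amortized count of such edges by Lemma~\ref{lem:amortized-number}, and multiply by the per-insertion cost. However, there is a real gap in the final multiplication step. You write that the insertion cost per update is $(k+1)\cdot U(n)$ and then multiply the amortized bound $O(n/f(n))$ on $k$ by $U(n)$ as though $n$ were a fixed quantity. But the amortization in Lemma~\ref{lem:amortized-number} is taken over an entire reconstruction period, during which the number of edges (call it $n_t$ at time $t$) changes with each update, so the per-insertion cost $U(n_t)$ also changes. The sum $\sum_t (k_t+1)U(n_t)$ over the period is not in general bounded by $(\sum_t k_t + \text{\#updates})\cdot U(n)$ for an arbitrary $n$ in the period; you need to justify that $U(n_t)$ stays within a constant factor of $U(n)$ throughout.

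The paper resolves exactly this: it shows that the maximum edge count $n'$ during a period satisfies $n' \leq 2n$ for every $n$ occurring in that period (using $n' - f(n_j) \leq n$, $n_j \leq 2n$, and $f(2n) \leq 2f(n) \leq n$), and then uses monotonicity and sublinearity of $U$ to conclude $U(n') = O(U(n))$. That lets one factor $U$ out of the sum and apply the counting bound. Without this step your bound is plausible but not proved. The rest of your argument --- in particular the observation that the concatenable queues of old edges let you enumerate exactly the edges that transition into $\newce$, so that the number of insertions per update is $k+1$ --- is correct and consistent with the paper.
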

\begin{proof}
	By Lemma~\ref{lem:amortized-number}, the average number of
	old edges which are set to communal is $O(n/f(n))$.  We use the
	notations defined in the proof of Lemma~\ref{lem:amortized-number}.
	In the period between $ i$ and $ j$, we insert $O(n_i)$ edges to
	the vertical ray shooting data structure in total because each edge
	we have at time $ i$ is inserted to the structure at most
	once. Thus the amortized update time of $\newd$ at any moment in
	this time is $O(n\cdot U(n')/f(n))$, where $n'$ is the maximum
	number of the edges we have in the period.
	
	We claim that $n'\leq 2n$.  Since $n'-f(n_j) \leq n$
	and $n_j\leq 2n$, we have
	$n' \leq f(n_j)+n \leq f(2n)+n \leq 2f(n)+n \leq 2n$.  Assuming that
	$U(n)$ is an increasing sublinear function, we have
	$U(n')=U(2n) =O(U(n))$, which implies the lemma.
\end{proof}


\subsection{Edge insertions}\label{subsec:edge-insertion}
We are given operation $\InsertEdge(e)$ for an edge $e$.  For the data
structure $\oldd$, we do nothing since the set of the old and communal
edges remains the same.  For the data structure $\newd$, we are
required to insert one new edge $e$ and several communal edges.  In
other words, we are required to update 
the ray shooting data structure, 
the concatenable queues 
and the pointers associated to each edge of
$\newce$.  We first process the update due to the communal edges, and
then process the update due to the new edge $e$.  The process for the
new edge $e$ is the same as the process for the communal edges, except
that there is only one new edge $e$, but there are $O(n/f(n))$
communal edges (amortized).  In the following, we describe the process
for the communal edges only.

Let $\mathsf{\overline{E}_n}$ be the union of the closures of all
edges of $\newce$, where $\newce$ is the set of the new and communal
edges before $\InsertEdge(e)$ is processed.  Recall that it is not
necessarily connected. 
Recall that the old edges on the outer boundary of the face intersected by $e$
become communal. If the outer boundary is connected to an inner boundary, we also set
the edges of the inner boundary to communal. In this case, let $\gamma$ be the 
cycle consisting of these two boundary components and $e$. Otherwise, 
let $\gamma$ be the outer boundary of the face
in $\oldm$ intersecting $e$. If $\gamma$ consists of only communal edges, we do
nothing. Thus we assume that it contains at least one old edge.  We
insert the old edges of $\gamma$ to $\newd$ in
$O(n\cdot U(n)/f(n))$ amortized time by
Corollary~\ref{cor:amortized-insertion}.  Recall that the average
number of such edges is $O(n/f(n))$ by
Corollary~\ref{cor:amortized-insertion}.  

Now we update the concatenable queues and the pointers made by the communal edges on $\gamma$ in
$O(f(n) Q(n)+n\log n/f(n))$.  Let $F$ be the face of $\newm$
intersected by $\gamma$.

\begin{lemma}\label{lem:face-tree-communal}
	The curve $\gamma$ intersects no connected component of
	$\mathsf{\overline{E}_n}$ enclosed by
	$\gamma$ assuming that $\gamma$ contains at least one old edge.
\end{lemma}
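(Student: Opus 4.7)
I plan to argue by contradiction. Suppose there exists a connected component $C$ of $\mathsf{\overline{E}_n}$ enclosed by $\gamma$ with $\gamma\cap C\neq\emptyset$. The strategy is to leverage the hypothesis that $\gamma$ carries an old edge $e_o$ in order to force the face of $\oldm$ bounded by $\gamma$ to be structurally frozen since the latest reconstruction, and then to argue that every edge of $\newce$ inside the enclosed region is sealed off from $\gamma$ by an inner boundary of that face.

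First I would identify $F$ as the face of $\oldm$ whose outer boundary contains $e_o$; by construction of $\gamma$ this is the face intersected by the currently inserted edge $e$ (with the first case of the definition of $\gamma$ giving $\gamma$ exactly as the outer boundary of $F$). The key invariant is the following: if any $\InsertEdge$ had been processed on $F$, or any $\DeleteEdge$ on an edge of $F$'s boundary (outer or inner), since the latest reconstruction, then by the promotion rules spelled out in Section~\ref{sec:ds} the \emph{entire} outer boundary of $F$ would have been set to communal, contradicting the oldness of $e_o$. Hence no such operation has been processed, $F$ is structurally unchanged as a face of $\oldm$, and in particular no new edge has ever been inserted into the interior of $F$. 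Consequently every edge of $\newce$ that lies in the region enclosed by $\gamma$ must sit either on one of $F$'s inner boundaries (necessarily as a communal edge, promoted by an insertion into the corresponding inner face) or strictly inside an inner face of $F$ (as a new or communal edge inserted into it). Since $F$'s interior is free of edges of $\oldm$ and of $\newce$, and the inner boundaries of $F$ are pairwise disjoint connected components of $\partial F$, the component $C$ is forced to lie in the closed region bounded by a single inner boundary $B$ of $F$. Because $B$, as a component of $\partial F$ distinct from the outer boundary of $F$, is disjoint from $\gamma$, we obtain $C\cap\gamma=\emptyset$, contradicting the assumption.

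The main obstacle I anticipate is the alternative form of $\gamma$, in which $\gamma$ is the combined cycle formed by $F$'s outer boundary, one of $F$'s inner boundaries, and the newly inserted edge $e$. There, $e$ is itself new and the incorporated inner boundary has just been promoted to communal by the current insertion, so $e_o$ must lie on the outer-boundary portion of $\gamma$, and the argument of the previous paragraph applies verbatim along that portion. For the portion of $\gamma$ consisting of $e$ together with the promoted inner boundary, I would combine the same frozenness of $F$ with the observation that this portion separates $F$'s interior from the inner face bounded by the incorporated inner boundary; a parallel structural-separation argument then rules out $C$ touching $\gamma$ along this portion as well, with the subtle case being the vertices where $e$ meets the outer- and inner-boundary portions of $\gamma$.
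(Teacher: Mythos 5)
Your approach is genuinely different from the paper's. The paper reduces the statement to the argument of Lemma~\ref{lem:no-oldnew}: it observes that only a new edge can be incident to $\gamma$ from inside, exhibits a face of the restricted subdivision incident to both a new edge $\newe$ and an old edge $\olde$ of $\gamma$, and then traces forward from the insertion of $\newe$ to conclude $\olde$ should already have been promoted to communal. You instead argue a ``frozenness'' invariant for $F$ itself and deduce that the interior of $F$ is empty of $\newce$-edges, so any enclosed component sits strictly inside an inner face of $F$ and cannot reach $\gamma$. If made precise, this is an attractive alternative because it avoids having to justify the existence of a common face carrying both an old and a new edge (a step the paper asserts rather tersely).

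However, there are two concrete gaps. First, the ``key invariant'' as you state it does not follow ``by the promotion rules'' alone. $F$ is the \emph{current} face of $\oldm$, whereas an earlier $\InsertEdge(e')$ was processed against the face $F_t$ of $\oldm$ \emph{at that earlier time}, and $F_t$ may be a strict sub-face of $F$ whose outer boundary does not contain $e_o$. The promotion then marks $F_t$'s outer boundary communal without touching $e_o$. To salvage the claim you must additionally argue that the subsequent deletions that merge $F_t$ with the sub-face carrying $e_o$ would have promoted $e_o$ along the way; this historical-merge argument is exactly the nontrivial part, and it is the same kind of forward-tracing the paper performs inside Lemma~\ref{lem:no-oldnew}, so it cannot be omitted. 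Second, you flag but do not resolve the case where $e_o$ lies on the inner-boundary portion of $\gamma$. There the frozenness of $F$ is not enough: $e_o$ is then also on the outer boundary of the face $G$ of $\oldm$ on the far side of that inner boundary, and an $\InsertEdge$ into $G$ (which is \emph{not} an operation ``on $F$'') would promote $e_o$; so for $e_o$ to remain old you must also establish frozenness of $G$, which in turn is what keeps $\newce$-edges inside the hole away from the inner-boundary portion of $\gamma$. A ``parallel structural-separation argument'' is gestured at but the needed statement is a separate invariant, not a corollary of the one you proved for $F$.
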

\begin{proof}
	Assume that $\gamma$ intersects a
	connected component of $\mathsf{\overline{E}_n}$ enclosed by
	$\gamma$. Since $\gamma$ comes from $\oldm$, neither communal nor old edge is incident to $\gamma$.  
	Thus, there is a new edge incident to $\gamma$. Consider the subdivision induced by all new
	and communal edges restricted to the region enclosed by $\gamma$ 
	after $\gamma$ is inserted. In this subdivision, there is a face $F$ incident
	to both a new edge $\newe$ and an old edge $\olde$.
	Note
	that $\newe$ is inserted after the latest reconstruction.  When it
	was inserted, all outer boundary edges of the face $F'$ that was
	intersected by $\newe$ in the subdivision $\mathsf{M'}$ of the old
	and communal edges at the moment were set to communal.  Since a
	communal edge is set to old only by a reconstruction, the only
	possibility for $\olde$ to remain as old is that $\olde$ was not on
	the outer boundary of $F'$ but on the outer boundary of another face
	in $\mathsf{M'}$, and after then it has become an outer boundary
	edge of $F$ by a series of splits and merges of the faces that are
	incident to $\olde$. These splits and merges occur only by
	insertions and deletions of edges, and $\olde$ is set to communal by
	such a change to the face that is incident to $\olde$, and remains
	communal afterwards. Therefore, $e_o$ is not old, which is a
	contradiction.
\end{proof}

By Lemma~\ref{lem:face-tree-communal}, there is a unique face
$F_\gamma$ in the subdivision $\newm$ after the communal edges are
inserted such that the outer boundary of $F_\gamma$ is $\gamma$. See
Figure~\ref{fig:communal}.  The boundaries of $F$ change due to the
communal edges, but the boundaries of the other faces remain the same.
More precisely, $F$ is subdivided into subfaces, one of which is $F_\gamma$. 
We compute the concatenable
queues for each boundary component of the subfaces. 
We first show how to do this for $F_\gamma$. Then we show
how to compute all boundaries of every subface in the same time
while computing the boundary of $F_\gamma$.

\begin{figure}
	\begin{center}
		\includegraphics[width=0.7\textwidth]{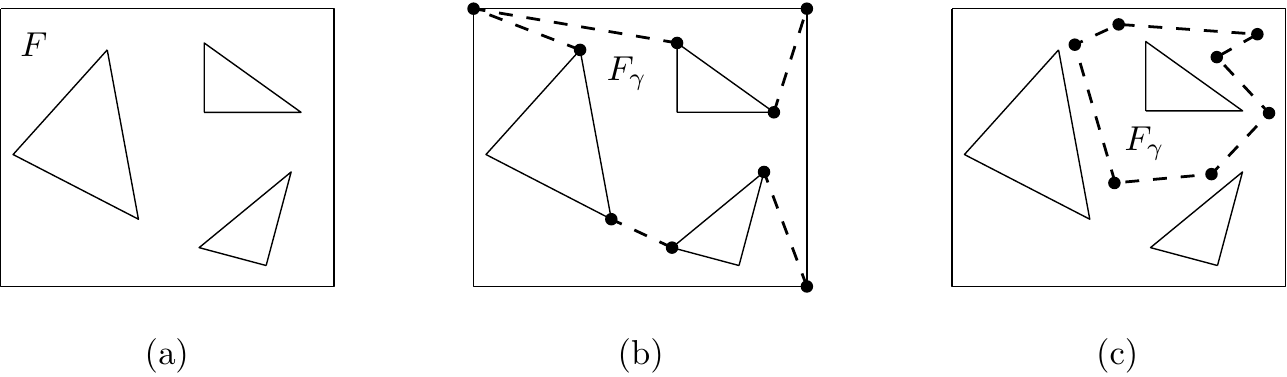}
		\caption{\small(a) Subdivision $\newm$ before the communal edges are
			inserted.  (b) The dashed edges are communal edges made by
			$\InsertEdge$.  They subdivide $F$ into three faces one of which
			is $F_\gamma$. The boundary of $F_\gamma$ is $\gamma$. (c) All edges of $\gamma$ are communal edges set
			by $\InsertEdge$. \label{fig:communal}}
	\end{center}
\end{figure}

\subparagraph{Concatenable queue for the outer boundary $\gamma$ of
	$F_\gamma$.}  We walk along the old edges of $\gamma$ which become
communal one by one using the concatenable queue for the old edges of
$\gamma$. We make an empty concatenable queue for $\gamma$, and insert
such edges one by one. If two consecutive old edges $g_1$ and $g_2$ of
$\gamma$ share no endpoints, there is a polygonal chain between $g_1$
and $g_2$ of $\gamma$ consisting of communal edges only. Notice that
this chain is a part of a boundary component of $F$. We find the boundary
component of $F$ in constant time. We split it 
with respect to $g_1$ and $g_2$, and combine
one subchain with the concatenable queue for $\gamma$.  We keep the
other subchain for updating the boundary of $F$.  In this way, we can
obtain the concatenable queue for the outer boundary $\gamma$ of
$F_\gamma$.  This takes $O(N\log n)$ time, where $N$ is the number of
old edges of $\gamma$ which become communal.

\subparagraph{Concatenable queues for the inner boundaries of
	$F_\gamma$.}  A inner boundary $\beta$ of $F$ might be enclosed by $F_\gamma$ in
the subdivision after the communal edges are inserted.  For each inner
boundary $\beta$ of $F$, we
check if it is enclosed by $F_\gamma$.  To do this, we compute
the edge $e'$ immediately lying above the topmost vertex of $\beta$ using
the vertical ray shooting data structure on all new and communal
edges, which include the edges of $\gamma$.  Using the pointer for
each edge $e'$ pointing to the elements in the concatenable
queues, we can find the boundary component $\beta_{e'}$
containing $e'$ in constant time.  If $\beta_{e'}$
is $\gamma$, we can determine if $\beta$ is enclosed by
$F_\gamma$ immediately.  Otherwise, $\beta$ is enclosed by 
$F_\gamma$ if and only if $\beta_{e'}$ is enclosed by $F_\gamma$.
Therefore, we can determine for each inner boundary $\beta$ of $F$ whether
$\beta$ is enclosed by $F_\gamma$ in $O(f(n)Q(n))$ time in total
since there are $O(f(n))$ inner boundaries of $F$ by Lemma~\ref{lem:new-face}.

Since each inner boundary of $F_\gamma$ was an inner boundary of $F$
before the communal edges are inserted, there is a
concatenable queue for each inner boundary of $F_\gamma$ whose root
node 
points to the outer boundary of $F$. We make the root of the concatenable queue to point
to $F_\gamma$, which takes $O(f(n))$ time in total for all inner
boundaries of $\gamma$.

\subparagraph {Concatenable queues for the boundary of $F$.}
We walk along the old edges of $\gamma$ one by one. Here we compute
all endpoints of the old edges of $\gamma$ that are already in $\newm$ 
using the balanced binary search tree on the vertices of $\newm$ in $O(N\log n)$ time
in total. For each old edge $e'$ with endpoint $u$ on $\newm$, 
we locate its position on the balanced binary search
tree on the edges incident to $u$ in $\newm$ in the same time.
Then we can find the boundary components of $F$ containing $u$.
For each such boundary component, we split its concatenable queue with respect
to the vertices of $\gamma$ on it.
In this way, we obtain a number of pieces of the boundary components of $F$.
By combining them with the old edges of $\gamma$, we can obtain
the outer boundaries of the subfaces of $F$ in $O(N\log n)$ time in total.

If $F$ is subdivided into exactly two subfaces, the outer boundary of one subface $F'$ 
is the outer boundary of $F$, and the outer boundary of the other is $\gamma$.
An inner boundary of $F'$ consists of old edges of $\gamma$ and parts of inner boundaries
of $F$. We can obtain the concatenable queue for this inner boundary similarly in $O(N\log n)$ time.
Then we are done in this case.

In the case that
$F$ is subdivided into more than two subfaces, 
we compute the other inner boundaries of each subface. Notice that 
they are inner boundaries of
$F$ before $\gamma$ is inserted. We find the subface enclosing each inner boundary of $F$.  
This can be done in $O(f(n) Q(n))$ time in total for
every inner boundary edge of $F$ as we did for inner
boundaries of $F_\gamma$. More precisely, for the topmost vertex of each inner boundary $\beta$ of $F$,
we find the edge $e'$ lying immediately above it among all new and communal edges including $\gamma$.
If $e'$ is an outer boundary edge of a subface, $\beta$ is an inner boundary of the subface.
Otherwise, $e'$ is an edge of an inner boundary $\beta'$ of $F$ before $\gamma$ is inserted.
Then $\beta$ is an inner boundary of a subface $F'$ if and only if
$\beta'$ is an inner boundary of $F'$.
Since there are $O(f(n))$ inner boundaries of $F$, this takes $O(f(n)Q(n))$ time in total.

\subparagraph{Pointers for edges.}  Finally, we update the pointers associated
to each edge of $\newce$ and each old edge of $\gamma$ which become
communal.  Recall that each edge of $\newce$ points to the element in
the concatenable queues corresponding to it. For the update of the
concatenable queues, we do not remove the elements of them. We just
make their pointers to point to other elements of the queues.
Therefore, we do not need to do anything for $\newce$. The only thing
we do is to make each old edge of $\gamma$ to point to the elements in
the queues, one representing the outer boundary of
$F_\gamma$ and one representing a boundary component of a subface of
$F$.  This takes $O(N\log n)$ time, where $N$ is the number of old
edges of $\gamma$ which become communal.

\medskip Therefore, the overall update time for inserting the communal
edges is $O(f(n) Q(n)+N\log n)$. Since the average value 
of $N$ is $O(n/f(n))$ by Lemma~\ref{lem:amortized-number}, the
amortized update time is $O(f(n) Q(n)+n\log n/f(n))$.
Similarly, the update time for the insertion of $e$ is
$O(f(n) Q(n) + \log n)$.
The amortized reconstruction time is
$O(n\log n/f(n))$ by Lemma~\ref{lem:update-old}.  Also, the amortized
time for inserting the communal edges to the vertical ray shooting
data structure is $O(n\cdot U(n)/f(n))$ by
Corollary~\ref{cor:amortized-insertion}.  Therefore, the overall
update time is $O(f(n) Q(n)+n\log n/f(n) + n\cdot U(n)/f(n))$,
which is $O(f(n) Q(n) + n\cdot U(n)/f(n))$ since
$U(n)=\Omega(\log n)$.

\begin{lemma}
	We can process $\InsertEdge(e)$ in
	$O(f(n) Q(n)+n\cdot U(n)/f(n))$ amortized time.
\end{lemma}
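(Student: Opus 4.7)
The plan is to split the work done on $\InsertEdge(e)$ into four categories and to bound each separately: (a) the amortized reconstruction cost, (b) insertions into the dynamic vertical ray shooting structure in $\newd$, (c) rebuilding the affected concatenable queues and pointers in $\newd$, and (d) classifying each inner boundary of the host face $F$ of $\newm$ into the subface of the updated $\newm$ that encloses it. The data structure $\oldd$ is untouched by an insertion (no edge is added to it), so its only contribution is the amortized rebuild cost.

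For (a) I would invoke Lemma~\ref{lem:update-old} directly to obtain $O(n\log n/f(n))$ amortized. For (b) the batch to be inserted into the vertical ray shooting structure consists of the $N$ old edges of $\gamma$ that become communal plus the single new edge $e$; by Lemma~\ref{lem:amortized-number} the amortized value of $N$ is $O(n/f(n))$, so Corollary~\ref{cor:amortized-insertion} yields $O(n\cdot U(n)/f(n))$ amortized. For (c), walking along the $N$ newly communal edges of $\gamma$ and, at each endpoint, locating it in the balanced search tree on vertices of $\newm$ and in the search tree on the edges incident to that vertex, together with the splits and concatenations in the 2--3 trees representing the affected boundary components of $F$, costs $O(N\log n)$; the pointer-updating sweep has the same cost. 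Amortizing once more via Lemma~\ref{lem:amortized-number} this is $O(n\log n/f(n))$.

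The main obstacle is (d). Here I would first invoke Lemma~\ref{lem:face-tree-communal} to conclude that $\gamma$ is the outer boundary of a single subface $F_\gamma$ of the updated $\newm$, so that every inner boundary of $F$ becomes either an inner boundary of $F_\gamma$ or an inner boundary of some other subface of $F$. For each such inner boundary $\beta$ I would shoot a vertical ray upward from its topmost vertex in $O(Q(n))$ time; the edge hit is mapped in $O(1)$ via its stored pointer to a boundary component, and following the root pointer of the corresponding concatenable queue determines which subface encloses $\beta$. The crucial quantitative input is Lemma~\ref{lem:new-face}, which bounds the number of inner boundaries of the host face $F$ of $\newm$ by $O(f(n))$, so (d) costs $O(f(n)Q(n))$ both worst case and amortized.

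Summing (a)--(d) gives an amortized bound of $O\bigl(f(n)Q(n)+n\log n/f(n)+n\cdot U(n)/f(n)\bigr)$. Since $U(n)=\Omega(\log n)$, the middle term is absorbed into the third, leaving $O(f(n)Q(n)+n\cdot U(n)/f(n))$ as claimed. The only subtlety in the write-up is to observe that the insertion of $e$ itself is handled by the same machinery specialized to $N=1$, contributing $O(f(n)Q(n)+\log n)$, which is dominated by the bounds above and therefore does not affect the final expression.
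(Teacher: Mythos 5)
Your decomposition into (a) amortized reconstruction cost, (b) vertical-ray-shooting insertions, (c) concatenable-queue and pointer rebuilding, and (d) inner-boundary classification via Lemmas~\ref{lem:face-tree-communal} and~\ref{lem:new-face} mirrors the paper's own argument, and you invoke the same supporting lemmas (Lemma~\ref{lem:update-old}, Lemma~\ref{lem:amortized-number}, Corollary~\ref{cor:amortized-insertion}) in the same roles, including the final absorption of the $n\log n/f(n)$ term using $U(n)=\Omega(\log n)$. The proof is correct and takes essentially the same route as the paper.
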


\subsection{Edge deletions}
We are given operation $\DeleteEdge(e)$ for an edge $e$.  For the data
structure $\oldd$, we update the semi-dynamic point location data
structure on the old and communal edges. We also update the
concatenable queues for old edges on the boundary components of a face
of $\oldm$.

For the data structure $\newd$, we are required to update the
concatenable queues and the pointers
associated to each edge of $\newce$. Here, we insert $O(n/f(n))$
communal edges and delete only one edge $e$ from the data structure.
The insertion of the communal edges is exactly the same as the case
for edge insertions in the previous subsection.  The deletion of $e$
is also similar to the update procedure for edge insertions.  Consider
the union $\mathsf{\overline{E}_n}$ of the closures of all edges of
$\newce\setminus\{e\}$.  There are four cases on the configuration of
$e$: both endpoints of $e$ lie on the same connected component of
$\mathsf{\overline{E}_n}$, the two endpoints of $e$ lie on two
distinct connected components of $\mathsf{\overline{E}_n}$, only one
endpoint of $e$ lies on $\mathsf{\overline{E}_n}$, or no endpoint of
$e$ lies on $\mathsf{\overline{E}_n}$. See Figure~\ref{fig:insert}.

\begin{figure}
	\begin{center}
		\includegraphics[width=0.7\textwidth]{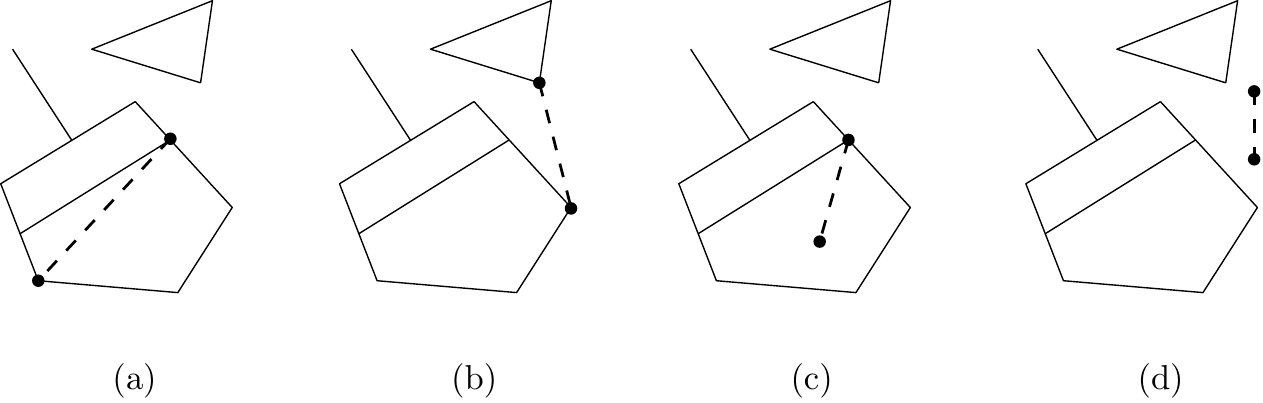}
		\caption{\small(a) The endpoints of $e$ (dashed line segment) lie
			on the same connected component of $\mathsf{\overline{E}_n}$.
			(b) The endpoints of $e$ lie on distinct connected components of
			$\mathsf{\overline{E}_n}$. (c) Only one endpoint of $e$ lies on
			$\mathsf{\overline{E}_n}$. (d) No endpoint of $e$ lies on
			$\mathsf{\overline{E}_n}$. \label{fig:insert}}
	\end{center}
\end{figure}

\subparagraph{Case 1: Two faces $F_1$ and $F_2$ are merged into one
	face $F$.}  We can find two faces $F_1$ and $F_2$ in $\newm$
incident to $e$ in constant time using the pointers that $e$ has.  We
merge the concatenable queues, one for the outer boundary of $F_1$ and
one for the outer boundary of $F_2$, after removing $e$ from the
queues in $O(\log n)$ time.  The resulting queue represents the outer boundary of $F$.
Every inner boundary of $F_1$ and $F_2$ becomes an inner
boundary of $F$.  Thus we make the concatenable
queues representing the inner boundaries of $F_1$ and $F_2$ to point
to $F$.  The other concatenable queues remain the same.
Since $F$ has $O(f(n))$ inner boundaries by Lemma~\ref{lem:new-face}, the
update of the pointers for the concatenable queues takes
$O(f(n))$ time.

Then we update the pointers associated to each edge of
$\newce\setminus\{e\}$. Recall that each edge of $\newce$
points to the elements in the
concatenable queues corresponding to it. For the update of the
concatenable queues, we remove the elements for $e$ only. For the
other elements, we just make their pointers to point to other elements
of the queues.  Therefore, we do not need to do anything for
$\newce\setminus\{e\}$.

\subparagraph{Case 2: A boundary component $\gamma$ of $F$ is
	split into two boundary components of $F$.}  We find the boundary component of $F$
containing $e$ in constant time using the pointers that $e$ has.  
It is possible that the component is an inner boundary of $F$ or an outer boundary of $F$.
For the first one, it is split into two inner boundaries of $F$.
For the second one, it is split into one inner boundary of $F$ and one outer boundary of $F$.
In any case, we delete $e$ from its concatenable queue and split the queue 
into two pieces with respect to $e$ in $O(\log n)$ time. We make each queue
to point to the concatenable queue for a boundary component of $F$ accordingly.
The other concatenable queues remain the same.

\subparagraph{Case 3: An edge of a boundary component $\gamma$
	disappears.}  We  update the concatenable queue for this boundary
in $O(\log n)$ time by deleting the element corresponding to $e$ from the concatenable queue. 
We do nothing for updating
the pointers associated to each edge of $\newce\setminus\{e\}$. In
total, Case~3 can be handled in $O(\log n)$ time.

\subparagraph{Case 4: A boundary component $\gamma$ disappears.}  
We simply remove the concatenable queue representing this boundary
component in constant time. It is an inner boundary of a face $F$ of $\newm$. 
The outer boundary of $F$ and $\gamma$ point to each other. We 
remove the pointers in constant time.

\medskip Therefore, the update time for the deletion of $e$ (excluding
the time for the insertion of the communal edges) is
$O(f(n) + \log n)$. The update time for inserting the communal edges
is $O(n\cdot U(n)/f(n)+f(n) Q(n))$.  The amortized
reconstruction time is $O(n\log n/f(n))$.  The deletion of $e$ from
the vertical ray shooting data structure takes $O(U(n))$ time.  Also,
the deletion of $e$ from $\oldd$ takes $O(\alpha(n))$ time.
Therefore, the overall update time is
$O(f(n) Q(n)+n\log n/f(n) + n\cdot U(n)/f(n))$, which is
$O(f(n) Q(n)+n\cdot U(n)/f(n))$ since $U(n)=\Omega(\log n)$.

\begin{lemma}
	We can process $\DeleteEdge(e)$ in
	$O(f(n) Q(n)+n\cdot U(n)/f(n))$ time.
\end{lemma}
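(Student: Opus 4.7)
The plan is to bound the total work of $\DeleteEdge(e)$ as the sum of five contributions, four of which already appear (or are directly analogous to quantities appearing) earlier in this section: (i) the amortized reconstruction cost, (ii) the work to promote old boundary edges to communal and insert them into the vertical ray shooting data structure of $\newd$, (iii) the work of maintaining the concatenable queues and pointers for those communal promotions, (iv) the actual removal of $e$ from $\newd$ across the four topological cases enumerated above, and (v) the removal of $e$ from the ray shooting structure and from $\oldd$.

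For (i), Lemma~\ref{lem:update-old} gives amortized reconstruction cost $O(n\log n / f(n))$. For (ii) and (iii), the deletion sets the outer boundaries of the at most two faces of $\oldm$ incident to $e$ to communal, so the same analysis used for $\InsertEdge$ applies verbatim: by Lemma~\ref{lem:amortized-number} the amortized number of newly communal edges is $O(n/f(n))$, by Corollary~\ref{cor:amortized-insertion} their insertion into the ray shooting structure costs $O(n\cdot U(n)/f(n))$ amortized, and the concatenable-queue/pointer update matches the $O(f(n)Q(n) + n\log n / f(n))$ bound from Section~\ref{subsec:edge-insertion}, the $f(n)Q(n)$ term coming from locating each of the $O(f(n))$ inner boundaries of the affected face of $\newm$ via vertical ray shooting (using Lemma~\ref{lem:new-face}).

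For (iv), I would verify that each of the four cases for removing $e$ from $\newd$ costs $O(f(n)+\log n)$. Cases~2, 3, and 4 each involve at most a constant number of queue splits or deletions plus constant pointer surgery, giving $O(\log n)$. The only case requiring the $f(n)$ term is Case~1, where two faces $F_1,F_2$ of $\newm$ merge into $F$: after removing $e$ and concatenating the two outer-boundary queues, the inner-boundary roots of both $F_1$ and $F_2$ must be re-pointed to the new outer boundary of $F$, and by Lemma~\ref{lem:new-face} there are $O(f(n))$ such inner boundaries. For (v), the deletion of $e$ from the ray shooting data structure costs $O(U(n))$, and its deletion from $\oldd$ costs $O(\alpha(n))$ by Theorem~\ref{thm:decrement}.

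Summing all five contributions yields $O(f(n)Q(n) + n\log n / f(n) + n\cdot U(n)/f(n) + U(n) + \alpha(n))$. Using $U(n)=\Omega(\log n)$ to absorb the $n\log n/f(n)$ term into $n\cdot U(n)/f(n)$, and noting that the standalone $U(n)$ and $\alpha(n)$ summands are dominated, the bound simplifies to $O(f(n)Q(n) + n\cdot U(n)/f(n))$, matching the statement. The main obstacle is Case~1 of (iv): one has to be careful that pointer redirection between the outer-boundary queue root and each inner-boundary queue root can really be done in $O(f(n))$, which relies on the doubly-pointed scheme introduced in Section~\ref{subsec:new-ds} and on the inner-boundary bound from Lemma~\ref{lem:new-face}; without either ingredient this step could blow up to $\Theta(n)$.
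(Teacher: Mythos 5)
Your proposal matches the paper's proof: the same five-way decomposition of the cost (reconstruction, communal promotions into the ray-shooting structure, concatenable-queue/pointer maintenance, the four-case removal of $e$ from $\newd$, and the $O(U(n))+O(\alpha(n))$ deletions), the same invocation of Lemmas~\ref{lem:update-old}, \ref{lem:amortized-number}, \ref{lem:new-face} and Corollary~\ref{cor:amortized-insertion}, and the same final absorption of $n\log n/f(n)$ into $n\,U(n)/f(n)$ via $U(n)=\Omega(\log n)$. Your extra care about Case~1 (the $O(f(n))$ inner-boundary repointings) is exactly the source of the $f(n)$ term in the paper's $O(f(n)+\log n)$ bound for the local deletion work.
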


\section{Query procedure}\label{sec:query}

We call the subdivision induced by all old, communal, and new
edges the \emph{complete subdivision} and denote it by $\compm$.
Sometimes we mention a face without specifying the subdivision if
the face is in the complete subdivision.

Given $\oldd$ and $\newd$, we are to answer $\locate(x)$, that is, to
find the face containing the query point $x$ in $\compm$.
Let $\oldf$ and $\newf$ be the faces of $\oldm$
and $\newm$ containing $x$, respectively.  By
Theorem~\ref{thm:decrement}, we can find $\oldf$ in $O(\log n)$ time.
For $\newf$, we find the edge $e$ of $\newm$ immediately lying above $x$
in $Q(n)$ time using the vertical ray shooting data structure. Then we
find the faces containing $e$ on their boundaries using the pointers
$e$ has.  There are at most two such faces.  Since the connected
components of the boundary of each face are oriented consistently, we
can decide which one contains $x$ in constant time. Therefore, we can
compute $\newf$ in $O(Q(n))$ time in total.  To answer $\locate(x)$, we
need the following lemmas.


\begin{lemma}\label{lem:no-oldnew}
	No face of $\compm$ contains both an old edge and new edge
	in its outer boundary.
\end{lemma}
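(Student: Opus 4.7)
The plan is to strengthen the statement slightly and prove, by induction on the number of update operations since the latest reconstruction, the invariant: for every face $F$ of $\compm$, the outer boundary of $F$ contains no new edge, or contains no old edge. The base case is immediate since just after a reconstruction every edge is old, so no outer boundary can contain a new edge.

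A central auxiliary observation that I would establish first: if $F^c$ is a face of $\compm$ whose outer boundary consists only of old and communal edges, then the outer boundary of $F^c$ coincides with the outer boundary of the face $F^o$ of $\oldm$ that contains $F^c$. The reason is that, under this hypothesis, the outer boundary of $F^c$ is a cycle of $\oldm$; the region it encloses contains $F^c \subseteq F^o$; and the cycle cannot be an inner boundary of $F^o$, since $F^c$ lies inside $F^o$ and not inside one of its holes. Hence the two cycles coincide.

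For the inductive step on $\InsertEdge(e)$, let $F^c$ be the face of $\compm$ and $F^o$ the face of $\oldm$ whose interiors contain the interior of $e$ just before the insertion. The update first promotes every old edge on the outer boundary of $F^o$ (and, when $e$ connects an outer and an inner boundary of $F^o$, on that inner boundary as well) to communal, and then adds $e$ as a new edge. Any face $F'$ of $\compm$ that appears after the insertion and has $e$ on its outer boundary lies inside $F^o$, so the outer boundary of $F'$ is formed from $e$ together with sub-paths of the boundary of $F^c$ and of inner boundaries of $F^c$ that $e$ touches. Applying the auxiliary observation to $F^c$ (and the analogous argument on the affected inner boundary components), any old-classified piece that could appear on the outer boundary of $F'$ has already been promoted to communal by the insertion rule. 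Therefore the outer boundary of $F'$ consists only of new and communal edges. Faces untouched by $e$ only see some old-to-communal relabellings, which preserves the invariant since communal edges count as both not-old and not-new.

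The deletion step is symmetric: removing $e$ merges adjacent faces of $\compm$, and the rule that promotes the outer boundaries of the $\oldm$-faces incident to $e$ to communal absorbs any old edge that could otherwise end up next to a new edge on the merged face's outer boundary; if $e$ itself is new, the outer boundaries of both faces being merged were already of the no-old type by induction, and the merged outer boundary inherits that property. The main obstacle I anticipate is the case analysis when the endpoints of $e$ sit on boundary components of $F^o$ in nontrivial configurations (for instance, $e$ linking two distinct inner boundaries of $F^o$, or a deletion merging $\oldm$-faces with nested inner structure). In those configurations one must verify that the promotion rule really converts every potentially problematic old edge into communal and that the auxiliary cycle-coincidence claim extends to the relevant inner boundary components of $F^c$.
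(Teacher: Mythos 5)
Your approach is genuinely different from the paper's, and the comparison is instructive. The paper argues by contradiction: assume a face $F$ of $\compm$ has a new edge $\newe$ and an old edge $\olde$ on its outer boundary, go back to the time $\newe$ was inserted, observe that $\olde$ could not have been on the outer boundary of the $\oldm$-face intersected by $\newe$ (else it would have become communal), and then argue that every subsequent split or merge of a face incident to $\olde$ would have turned $\olde$ communal. You instead run an induction over updates and isolate a clean auxiliary claim: if the outer boundary of a $\compm$-face consists only of old and communal edges, it coincides with the outer boundary of the enclosing $\oldm$-face. That auxiliary claim is essentially (a strengthened-hypothesis version of) the paper's Lemma~\ref{lem:outer}, but you are careful to prove it under a hypothesis that does not presuppose the current lemma, so there is no circularity. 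The induction-plus-coincidence route is arguably cleaner than the paper's history-tracing argument, which is informal about what "a series of splits and merges" buys.

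That said, what you explicitly flag as "the main obstacle" is in fact a genuine gap, not merely a tedious case check, and it is the same place where the paper's own argument is weakest. Concretely: let $e$ have both endpoints on the same inner boundary $\beta$ of $F^o$, with $\beta$ containing old edges, and suppose $e$ does not touch the outer boundary of $F^o$. The stated promotion rule only turns the outer boundary of $F^o$ communal (and an inner boundary only when $e$ \emph{connects} it to the outer boundary), so $\beta$ stays old. Yet inserting $e$ carves off a new face of $\compm$ whose outer boundary is $e$ together with an arc of $\beta$ — i.e., a new edge next to old edges on an outer boundary. Your auxiliary observation handles the outer boundary of $F^c$, but the "analogous argument on the affected inner boundary components" you allude to does not go through as stated, because those components are simply not promoted by the update rule in this configuration. (The same hole afflicts the paper's proof: it asserts that any split of a face incident to $\olde$ sets $\olde$ communal, which is false for an inner-boundary edge in this scenario.) To close the gap you would need either (i) to show that this configuration cannot occur — which seems false — or (ii) to read the promotion rule more generously as "promote every boundary component of $F^o$ on which an endpoint of $e$ lies," under which interpretation your auxiliary-observation argument extends to the inner boundaries and the induction goes through. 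As written, you have a correct outline with an honestly acknowledged but load-bearing hole exactly where the statement is delicate.
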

\begin{proof}
	Assume to the contrary that both a new edge $\newe$ and an old edge
	$\olde$ lie on the outer boundary of a face $F$ of $\compm$. Note
	that $\newe$ is inserted after the latest reconstruction.  When
	$\newe$ was inserted, all outer boundary edges of the face $F'$ that
	was intersected by $\newe$ in $\oldm$ at the moment
	were set to communal.  Since a communal edge is set to old only by a
	reconstruction, the only possibility for $\olde$ to remain as old is
	that $\olde$ was not on the outer boundary of $F'$ but on the outer
	boundary of another face in $\oldm$, and after then it has
	become an outer boundary edge of $F$ by a series of splits and
	merges of the faces that are incident to $\olde$. These splits and
	merges occur only by insertions and deletions of edges, and $\olde$
	is set to communal by such a change to the face that is incident to
	$\olde$, and remains communal afterwards.
	This contradicts that $\olde$ is old, and this case never occurs.
\end{proof}
\begin{lemma}\label{lem:outer}
	For any face $F$ in $\compm$, there exists a face in $\oldm$ or in
	$\newm$ whose outer boundary coincides with the outer boundary of
	$F$.
\end{lemma}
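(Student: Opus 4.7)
The plan is to apply Lemma~\ref{lem:no-oldnew} to dichotomize on the edge types of the outer boundary of $F$, and in each case to exhibit the desired face explicitly by following $F$ back into the appropriate subdivision. Since the unbounded face has no outer boundary, we may assume $F$ is bounded; the statement is vacuous otherwise. By Lemma~\ref{lem:no-oldnew}, the outer boundary $\gamma$ of $F$ contains no new edge or contains no old edge. In the first case $\gamma$ lies entirely in $\oldm$; in the second case $\gamma$ lies entirely in $\newm$. By symmetry we treat only the first case and produce a face in $\oldm$ whose outer boundary is $\gamma$.

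Since $\compm$ is obtained from $\oldm$ by inserting only edges (the new edges), every face of $\compm$ is contained in a unique face of $\oldm$. Let $F_o$ be the face of $\oldm$ with $F\subseteq F_o$. I would show that the outer boundary of $F_o$ equals $\gamma$.

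The key step is to verify that $\gamma$ is a single boundary component of $F_o$ in $\oldm$. At every vertex $u$ on $\gamma$, the face $F$ occupies the angular wedge at $u$ between the two $\gamma$-edges on its side, and no edge of $\compm$ can lie inside this wedge. In particular, no edge of $\oldm$ (a subset of $\compm$) lies in this wedge, so the two $\gamma$-edges at $u$ are consecutive in the cyclic order around $u$ in $\oldm$ with $F_o$ sandwiched between them. Concatenating this observation over all vertices of $\gamma$, the traversal of $\gamma$ with $F_o$ on the left in $\oldm$ forms a single closed walk, identifying $\gamma$ as one boundary component of $F_o$.

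Finally, I would check that $\gamma$ is the outer, not inner, boundary of $F_o$. Because $F\subseteq F_o$ lies in the region enclosed by $\gamma$, and because $F_o$ is a connected subset of the plane disjoint from every edge of $\oldm$, in particular from $\gamma$, all of $F_o$ lies in the region enclosed by $\gamma$. Thus $F_o$ sits on the bounded side of $\gamma$, which by definition characterizes $\gamma$ as the outer boundary of $F_o$. The main obstacle is the wedge argument: one has to rule out that removing the new edges incident to vertices of $\gamma$ merges $\gamma$ with another boundary component of $F_o$ in $\oldm$, and the fact that no edge of $\compm$ can enter the $F$-side wedge at each vertex of $\gamma$ is exactly what makes the two $\gamma$-edges remain cyclically adjacent around each vertex after the new edges are stripped away.
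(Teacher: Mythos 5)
Your proposal is correct and follows essentially the same route as the paper's proof: invoke Lemma~\ref{lem:no-oldnew} to split on whether the outer boundary $\gamma$ of $F$ consists of $\oldm$-edges or $\newm$-edges, observe that no edge of the relevant coarser subdivision crosses $F$, and conclude that $\gamma$ serves as the outer boundary of the face of that subdivision containing $F$. The paper's version is terser — it asserts the final conclusion directly after noting that no $\oldm$-edge meets $F$ — whereas you supply the missing justification (the angular-wedge argument showing $\gamma$ remains a single boundary component of $F_o$, and the connectedness argument placing $F_o$ on the bounded side of $\gamma$); both of these fill real gaps in the terse write-up and are sound.
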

\begin{proof}
	By Lemma~\ref{lem:no-oldnew}, the outer boundary of $F$ contains
	either no new edge or no old edge.
	Consider the case that the outer boundary of $F$ contains no new
	edge, that is, all edges on the outer boundary of $F$ are in $\oldm$.
	No edge of $\oldm$ intersects $F$ because all
	edges of $\oldm$ appear in $\compm$ but no edge of $\compm$
	intersects $F$.  Thus there is a face in $\oldm$ whose outer boundary
	coincides with the outer boundary of $F$.
	Similarly, we can prove the lemma for the
	case that the outer boundary of $F$ contains no old edge.
\end{proof}


Using the two lemmas above, we can obtain the following lemma.
\begin{lemma}\label{lem:locate}
	For any two points in the plane, they are in the same face in $\compm$
	if and only if they are in the same
	face in $\oldm$ and in the same face in $\newm$.
\end{lemma}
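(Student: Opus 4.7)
The forward direction will be immediate: since every edge of $\oldm$ and every edge of $\newm$ also appears in $\compm$, the subdivision $\compm$ refines both, so each face of $\compm$ sits inside a single face of $\oldm$ and a single face of $\newm$. Hence two points in the same face of $\compm$ automatically lie in the same face of $\oldm$ and in the same face of $\newm$.

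For the reverse direction I will argue by contradiction: suppose $p, q$ lie in the same face $F_o$ of $\oldm$ and the same face $F_n$ of $\newm$, but lie in distinct faces $F_p, F_q$ of $\compm$. Applying Lemma~\ref{lem:outer} to $F_p$, its outer boundary coincides with the outer boundary of some face in $\oldm$ or in $\newm$; since that face must contain $p$ in its enclosed region, it must be $F_o$ or $F_n$. The same holds for $F_q$. If both $F_p$ and $F_q$ have outer boundary equal to that of $F_o$, or both equal to that of $F_n$, then since a bounded face is determined by its outer boundary we obtain $F_p = F_q$, contradicting our assumption. So I may assume the mixed case: the outer boundary of $F_p$ equals that of $F_o$, while the outer boundary of $F_q$ equals that of $F_n$.

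To handle the mixed case I will derive a contradiction with Lemma~\ref{lem:no-oldnew}. Since $p \in F_n$ and $F_p$ is open, connected, and disjoint from every edge of $\compm$, the whole face $F_p$ must lie inside the outer boundary of $F_n$, which equals the outer boundary of $F_q$. Because $F_q$ sits directly inside its own outer boundary and $F_p \neq F_q$, the face $F_p$ must sit inside some inner boundary $\beta$ of $F_q$, so $\beta$ separates $p$ from $q$. Now a path from $p$ to $q$ inside the connected set $F_o$ avoids every old and communal edge but must cross $\beta$, forcing $\beta$ to contain a new edge; symmetrically a path from $p$ to $q$ inside $F_n$ forces $\beta$ to contain an old edge. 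Finally, let $F'$ be the face of $\compm$ lying immediately inside $\beta$: then $\beta$ is the outer boundary of $F'$, and it contains both an old and a new edge, contradicting Lemma~\ref{lem:no-oldnew}. The delicate step is this nesting argument — pinpointing the inner boundary $\beta$ of $F_q$ that encloses $F_p$ and recognizing it as the outer boundary of a further face $F'$ so that Lemma~\ref{lem:no-oldnew} can be invoked; everything else is a direct application of the hypotheses and the two preceding lemmas.
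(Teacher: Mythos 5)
Your forward direction is correct and in fact cleaner than the paper's: the refinement argument (every edge of $\oldm$ and of $\newm$ is also an edge of $\compm$, so every face of $\compm$ is contained in a single face of each) is all that is needed.

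For the reverse direction, both you and the paper start from Lemma~\ref{lem:outer}, but the paper then argues directly via enclosure: $x$ is enclosed by $F_x'$ and $y$ by $F_y'$, and since the outer boundaries of $F_x$ and $F_y$ are distinct noncrossing closed curves, one of $x,y$ fails to be enclosed by the other's matched face, which forces different faces in $\oldm$ or in $\newm$. You instead branch on which subdivision supplies each matched outer boundary. Your ``same-side'' cases are handled correctly. Your ``mixed'' case, however, has a gap at the last step: you assert that the inner boundary $\beta$ of $F_q$ separating $p$ from $q$ is the outer boundary of a single face $F'$ of $\compm$. This need not be true. An inner boundary is a cyclic sequence of half-edges incident to $F_q$, and if edges of $\compm$ attach to $\beta$ inside the pocket it bounds (a chord joining two of its vertices, say), then the pocket is partitioned into several faces, none of which has all of $\beta$ on its outer boundary; the old edge and the new edge you found on $\beta$ can then lie on the outer boundaries of two \emph{different} faces, and Lemma~\ref{lem:no-oldnew} yields no contradiction. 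The mixed case can in fact be disposed of without any of this: since $p,q\in F_o$ and $p,q\in F_n$, the face $F_q$ is enclosed by the outer boundary of $F_o$, which by assumption equals the outer boundary of $F_p$, and symmetrically $F_p$ is enclosed by the outer boundary of $F_q$. Two distinct faces of a planar subdivision cannot each be enclosed by the other's outer boundary, so the mixed case is vacuous and the topological argument about $\beta$ was never needed.
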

\begin{proof}
	Assume that two points $x$ and $y$ are in the same face $F$ in $\compm$.
	There is a face $F'$ in $\oldm$ or in $\newm$
	whose outer boundary coincides with the outer boundary of $F$ by
	Lemma~\ref{lem:outer}. Consider a face $F''$ in $\oldm$ or $\newm$ enclosed by
	$F'$.  Neither $x$
	nor $y$ is enclosed by $F''$ because the edge set of $\oldm$ (and $\newm$) is a subset
	of the edge set of $\compm$. Since the complete
	subdivision $\compm$ is planar, this implies that $x$ and $y$ are in the same
	face in both $\oldm$ and $\newm$.
	
	Now assume that two points $x$ and $y$ are in different faces in $\compm$.
	Let $F_x$ and $F_y$ be the faces containing
	$x$ and $y$ in $\compm$, respectively.  This means
	that $x$ is not enclosed by $F_y$ or $y$ is not enclosed by $F_x$.
	The outer boundaries of $F_x$ and $F_y$
	are distinct. By Lemma~\ref{lem:outer}, there are faces $F_x'$ and
	$F_y'$ in $\oldm$ or in $\newm$ whose outer boundaries
	coincide with the outer boundaries of $F_x$ and $F_y$,
	respectively.  Note that $x$ is enclosed by $F_x'$ and $y$
	is enclosed by $F_y'$. However, either $x$ is not enclosed by
	$F_y'$ or $y$ is not enclosed by $F_x'$. Therefore, $x$
	and $y$ are in different faces in $\oldm$ or $\newm$, which proves
	the lemma.
\end{proof}

Lemma~\ref{lem:locate} immediately gives an $O(Q(n))$-time query
algorithm. We represent the name of each face in $\compm$
by the pair consisting of two faces, one from $\oldm$ and
one from $\newm$, corresponding to it.  In this way,
we have the following lemma.

\begin{lemma}
	We can answer any query $\locate(x)$ in $O(Q(n))$ time 
	using $\oldd$ and $\newd$.
\end{lemma}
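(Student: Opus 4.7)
The plan is to exploit Lemma~\ref{lem:locate}, which gives a clean characterization: two points sit in the same face of $\compm$ if and only if they sit in the same face of both $\oldm$ and $\newm$. This means we can legitimately use the ordered pair $(\oldf, \newf)$, where $\oldf$ is the face of $\oldm$ containing $x$ and $\newf$ is the face of $\newm$ containing $x$, as the name of the face of $\compm$ containing $x$: any two query points yielding the same pair must lie in the same face of $\compm$, and any two query points lying in the same face of $\compm$ must yield the same pair.

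Given this naming scheme, the query algorithm is to compute these two faces separately on the two maintained data structures. First, I would query $\oldd$ at $x$ to obtain $\oldf$; by Theorem~\ref{thm:decrement}, this takes $O(\log n)$ time. Second, I would query $\newd$ at $x$ to obtain $\newf$: using the vertical ray shooting data structure supporting $\newm$, find the edge $e$ lying immediately above $x$ in $Q(n)$ time, then use the pointer stored with $e$ into the concatenable queue of a bounding cycle to retrieve the at most two faces of $\newm$ having $e$ on their boundary, and decide which one contains $x$ from the consistent orientation of boundary components in constant time. Return the pair $(\oldf, \newf)$.

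The total query time is $O(\log n) + O(Q(n))$, which simplifies to $O(Q(n))$ since $Q(n) = \Omega(\log n)$ under the pointer machine model (as already noted in the preliminaries). There is essentially no obstacle here beyond assembling the ingredients already developed: the heavy lifting was done in Lemma~\ref{lem:locate} and in the construction of $\oldd$ and $\newd$. The only mildly delicate point is the constant-time step of deciding which of the (at most two) faces of $\newm$ incident to $e$ contains $x$, which is handled by the consistent counterclockwise orientation of boundary half-edges established in the Preliminaries, so that the face lying to the left of the directed half-edge above $x$ is exactly $\newf$.
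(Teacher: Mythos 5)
Your proposal is correct and follows essentially the same route as the paper: represent the face of $\compm$ by the pair $(\oldf,\newf)$, justified by Lemma~\ref{lem:locate}, then compute $\oldf$ via $\oldd$ in $O(\log n)$ time and $\newf$ via vertical ray shooting on $\newd$ plus a constant-time orientation check, for a total of $O(Q(n))$. No gaps to note.
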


By setting $f(n)=\sqrt{n\cdot U(n)/Q(n)}$, we have the following
theorem.
\begin{theorem}
	We can construct a data structure of size $O(S(n))$ so that
	$\locate(x)$ can be answered in $O(Q(n))$ time
	for any point $x$ in the plane. Each update,
	$\InsertEdge(e)$ or $\DeleteEdge(e)$, can be processed in
	$O(\sqrt{n \cdot U(n) \cdot Q(n)})$ amortized time, where $n$ is the number
	of edges at the moment.
\end{theorem}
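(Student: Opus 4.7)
The plan is to assemble the theorem directly from the per-operation lemmas already established in Sections~\ref{sec:ds}--\ref{sec:query}, once we commit to a specific reconstruction period. From the two update lemmas at the end of Sections~\ref{subsec:edge-insertion} and the deletion subsection, both $\InsertEdge(e)$ and $\DeleteEdge(e)$ cost $O(f(n)\,Q(n)+n\cdot U(n)/f(n))$ amortized time; from the final lemma in Section~\ref{sec:query} the query cost is $O(Q(n))$; and from Lemma~\ref{lem:size-ds-new} the total size is $O(S(n))$. Only the update bound depends on $f$, and the query and space bounds hold for any valid choice of $f$.

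To minimize the update bound I would equate the two summands $f(n)\,Q(n)$ and $n\,U(n)/f(n)$, yielding the optimizer $f(n)=\sqrt{n\cdot U(n)/Q(n)}$. Substituting this choice gives amortized update time
\[
O\!\left(\sqrt{n\cdot U(n)\cdot Q(n)}+\sqrt{n\cdot U(n)\cdot Q(n)}\right)=O\!\left(\sqrt{n\cdot U(n)\cdot Q(n)}\right),
\]
as claimed, while the query time remains $O(Q(n))$ and the size remains $O(S(n))$.

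The only nontrivial step is to check that $f(n)=\sqrt{n\cdot U(n)/Q(n)}$ is an admissible reconstruction period, i.e.\ it is increasing and satisfies $f(n)/2\le f(n/2)\le n/4$ for all sufficiently large $n$. The upper bound $f(n/2)\le n/4$ follows because $U(n)=o(n)$ and $Q(n)=\Omega(\log n)$ (as stated in the preliminaries), so $f(n/2)=o(n/\sqrt{\log n})$. The lower bound $f(n/2)\ge f(n)/2$ is equivalent to $U(n)\,Q(n/2)\le 2\,U(n/2)\,Q(n)$, which holds for the standard ray-shooting structures (polylogarithmic $U$ and $Q$) because $U$ and $Q$ both grow slower than any polynomial. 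Monotonicity of $f$ follows from $U$ being increasing and $1/Q$ being essentially decreasing up to constants; if one is concerned about pathological $Q$, the argument is robust to replacing $f$ by $\Theta(\sqrt{n\,U(n)/Q(n)})$ rounded to a convenient monotone function, which changes the bounds only by constants.

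The main (very mild) obstacle is therefore just this admissibility check; the rest of the proof is a direct plug-in into the lemmas already proved. Combining everything, the resulting data structure has the claimed size $O(S(n))$, query time $O(Q(n))$, and amortized update time $O(\sqrt{n\cdot U(n)\cdot Q(n)})$, which completes the proof.
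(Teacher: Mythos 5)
Your proof is correct and takes essentially the same approach as the paper, which simply states ``By setting $f(n)=\sqrt{n\cdot U(n)/Q(n)}$, we have the following theorem'' and leaves the plug-in implicit. Your explicit verification that this choice of $f$ is an admissible reconstruction period (increasing and satisfying $f(n)/2\le f(n/2)\le n/4$) is a worthwhile detail that the paper omits.
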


Using the data structure by Chan and Nekrich~\cite{cn-locatoin-2015},
we set $S(n)=n$, $Q(n)=\log n(\log\log n)^2$ and
$U(n)=\log n\log\log n$.

\begin{corollary}
	We can construct a data structure of size $O(n)$ so that
	$\locate(x)$ can be answered in $O(\log n(\log\log n)^2)$
	time for any point $x$ 
	in the plane. Each update, $\InsertEdge(e)$ or
	$\DeleteEdge(e)$, can be processed in $O(\sqrt{n}\log n(\log\log n)^{3/2})$
	amortized time, where $n$ is the number of edges at the moment.
\end{corollary}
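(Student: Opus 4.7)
My plan is to derive the corollary as a direct instantiation of the preceding theorem, which already packages the analysis in terms of the black-box parameters $S(n)$, $U(n)$, and $Q(n)$ of the dynamic vertical ray shooting data structure. So the work reduces to (i) verifying that the data structure of Chan and Nekrich~\cite{cn-locatoin-2015} can be plugged in as the required black box, and (ii) substituting its parameters into the expressions given by the theorem.

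For step (i), I would recall that the Chan--Nekrich structure is a pointer-machine dynamic vertical ray shooting structure for arbitrary (not necessarily connected) planar subdivisions of line segments, with $S(n)=O(n)$ space, $U(n)=O(\log n \log\log n)$ amortized update time, and $Q(n)=O(\log n (\log\log n)^2)$ query time. These are precisely the assumptions under which the preceding theorem is stated (the theorem already takes $U(n)=\Omega(\log n)$, $U(n)=o(n)$, $Q(n)=\Omega(\log n)$, and $U$ increasing, all of which are satisfied).

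For step (ii), substituting $S(n)=n$ into the space bound gives $O(n)$; substituting $Q(n)=\log n(\log\log n)^2$ into the query bound yields the claimed query time directly. For the update time, I would compute
\[
\sqrt{n\cdot U(n)\cdot Q(n)} \;=\; \sqrt{n\cdot \log n\log\log n \cdot \log n(\log\log n)^2}
\;=\; \sqrt{n}\,\log n\,(\log\log n)^{3/2},
\]
which matches the stated amortized update time. Since the theorem achieves this bound by choosing $f(n)=\sqrt{n\cdot U(n)/Q(n)}$, I would also sanity-check that with the Chan--Nekrich parameters this gives $f(n)=\sqrt{n/\log\log n}$, which is increasing and satisfies $f(n)/2\le f(n/2)\le n/4$ for all sufficiently large $n$, so the hypothesis on the reconstruction period is met.

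There is no genuine obstacle here; the only thing to be careful about is making sure that the quoted Chan--Nekrich parameters are the correct amortized (not worst-case) bounds and that they apply to general disconnected subdivisions on the pointer machine, so that no additional overhead is introduced when used as the black box in $\newd$. Given the discussion in the Previous Work paragraph of the introduction, this is immediate, and the corollary then follows.
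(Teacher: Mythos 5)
Your proposal is correct and takes essentially the same approach as the paper: the corollary is obtained by substituting the Chan--Nekrich parameters $S(n)=n$, $U(n)=\log n\log\log n$, $Q(n)=\log n(\log\log n)^2$ into the preceding theorem. Your added sanity check that $f(n)=\sqrt{n/\log\log n}$ satisfies the reconstruction-period hypothesis is a useful (and correct) detail that the paper leaves implicit.
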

\newpage


\begin{thebibliography}{10}
	
	\bibitem{Aho}
	Alfred~V. Aho and John~E. Hopcroft.
	\newblock {\em The Design and Analysis of Computer Algorithms}.
	\newblock Addison-Wesley Longman Publishing Co., Inc., 1974.
	
	\bibitem{ABG-Improved-2006}
	Lars Arge, Gerth~St{\o}lting Brodal, and Loukas Georgiadis.
	\newblock Improved dynamic planar point location.
	\newblock In {\em Proceedings of the 47th Annual {IEEE} Symposium on
		Foundations of Computer Science {(FOCS} 2006)}, pages 305--314, 2006.
	
	\bibitem{BJM-Dynamic-1994}
	Hanna Baumgarten, Hermann Jung, and Kurt Mehlhorn.
	\newblock Dynamic point location in general subdivisions.
	\newblock {\em Journal of Algorithms}, 17(3):342--380, 1994.
	
	\bibitem{Bentley1980}
	Jon~Louis Bentley and James~B Saxe.
	\newblock Decomposable searching problems 1: Static-to-dynamic transformations.
	\newblock {\em Journal of Algorithms}, 1(4):301--358, 1980.
	
	\bibitem{cn-locatoin-2015}
	Timothy~M. Chan and Yakov Nekrich.
	\newblock Towards an optimal method for dynamic planar point location.
	\newblock In {\em Proceedings of the 56th Annual {IEEE} Symposium on
		Foundations of Computer Science (FOCS 2015)}, pages 390--409, 2015.
	
	\bibitem{Cheng-NewResults-1992}
	Siu{-}Wing Cheng and Ravi Janardan.
	\newblock New results on dynamic planar point location.
	\newblock {\em SIAM Journal on Computing}, 21(5):972--999, 1992.
	
	\bibitem{CGbook}
	Mark de~Berg, Otfried Cheong, Marc van Kreveld, and Mark Overmars.
	\newblock {\em Computational Geometry: Algorithms and Applications}.
	\newblock Springer-Verlag TELOS, 2008.
	
	\bibitem{matousek-ept-1992}
	Ji\v{r}\'{i} Matous\v{e}k.
	\newblock Efficient partition trees.
	\newblock {\em Discrete \& Computational Geometry}, 8(3), 1992.
	
	\bibitem{overmars}
	Mark~H. Overmars.
	\newblock Range searching in a set of line segments.
	\newblock Technical report, Rijksuniversiteit Utrecht, 1983.
	
	\bibitem{ol-dynamic-1981}
	Mark~H. Overmars and Jan van Leeuwen.
	\newblock Worst-case optimal insertion and deletion methods for decomposable
	searching problem.
	\newblock {\em Information Processing Letters}, 12(4):168--173, 1981.
	
	\bibitem{ST-location-1986}
	Neil Sarnak and Robert~E. Tarjan.
	\newblock Planar point location using persistent search trees.
	\newblock {\em Communications of the ACM}, 29(7):669--679, 1986.
	
	\bibitem{handbook}
	Jack Snoeyink.
	\newblock Point location.
	\newblock In {\em Handbook of Discrete and Computational Geometry, Third
		Edition}, pages 1005--1023. Chapman and Hall/CRC, 2017.
	
	\bibitem{Tarjan-1975}
	Robert~Endre Tarjan.
	\newblock Efficiency of a good but not linear set union algorithm.
	\newblock {\em Journal of the ACM}, 22(2):215--225, 1975.
	
\end{thebibliography}
\end{document}